\documentclass[journal, onecolumn, 12pt]{IEEEtran}
\usepackage{setspace}
\doublespacing
\usepackage{amsfonts}
\usepackage{indentfirst}
\usepackage{amsmath,amsthm}
\usepackage{amssymb}
\usepackage{cite} 
\usepackage{color}
\usepackage{graphicx}
\usepackage{multirow,makecell,diagbox}
\usepackage[para]{threeparttable}
\newtheorem{Theorem}{Theorem}
\newtheorem{Corollary}{Corollary}
\newtheorem{Definition}{Definition}
\newtheorem{Proposition}{Proposition}
\newtheorem{Remark}{Remark}
\newtheorem{Lemma}{Lemma}
\newtheorem{Example}{Example}

\begin{document}

\title{New Complementary Sets with Low PAPR Property under Spectral Null Constraints}
\author{Yajing Zhou,
\thanks{Y. Zhou is with the School of Information Science and Technology, Southwest Jiaotong University,
Chengdu, 610031, China  (e-mail:
zhouyajing@my.swjtu.edu.cn).}
Yang Yang, \thanks{Y. Yang and Z. Zhou are with the School of Mathematics, Southwest Jiaotong University,
Chengdu, 610031, China  (e-mail:
yang\_data@swjtu.edu.cn, zzc@swjtu.edu.cn).}
Zhengchun Zhou, 
Kushal Anand, \thanks{K. Anand is with School of Electrical and Electronic Engineering, Nanyang Technological University, Singapore (email: KUSH0005@e.ntu.edu.sg).}
Su Hu, \thanks{S. Hu is with the National
Key Laboratory on Communications, University of Electronic Science and
Technology of China, Chengdu, China (E-mail: husu@uestc.edu.cn).}
and Yong Liang Guan \thanks{Y. L.  Guan is with School of Electrical and Electronic Engineering, Nanyang Technological University, Singapore (email: EYLGuan@ntu.edu.sg).}
}

\maketitle
\date{}
\begin{abstract}
Complementary set sequences (CSSs) are useful for dealing with the high peak-to-average power ratio (PAPR) problem in orthogonal frequency division multiplexing (OFDM) systems. In practical OFDM transmission, however, certain sub-carriers maybe reserved and/or prohibited to transmit signals, leading to the so-called \emph{spectral null constraint} (SNC) design problem. For example, the DC sub-carrier is reserved to avoid the offsets in D/A and A/D converter in the LTE systems. While most of the current research focus on the design of low PAPR CSSs to improve the code-rate, few works address the aforementioned SNC in their designs. This motivates us to investigate CSSs with SNC as well as low PAPR property. In this paper, we present systematic constructions of CSSs under SNCs and low PAPR. First, we show that mutually orthogonal complementary sets (MOCSs) can be used as \emph{seed sequences} to generate new CSSs with SNC and low PAPR, and then provide an iterative technique for the construction of MOCSs which can be further used to generate complementary sets (CSs) with low PAPRs and spectral nulls at \emph{varying} positions in the designed sequences. Next, inspired by a recent idea of Chen,  we propose a novel construction of these \emph{seed} MOCSs with non-power-of-two lengths from  generalized Boolean functions.

\textbf{Key words:} Aperiodic correlation, complementary sets, PAPR, null constraint, OFDM
\end{abstract}

\section{Introduction}

Orthogonal frequency-division multiplexing (OFDM) is a modulation technique that transmits data by dividing it into several low rate data streams which are modulated over a number of sub-carriers \cite{00-Nee-Prasad}. Due to its robustness to intersymbol interference in multi-path fading and low complexity equalization at the receiver end, this technique has attracted much attention over the past decades, and has been widely adopted in several wireless system standards and products, such as IEEE 802.11, IEEE 802.16 \cite{09-IEEE} and so on. However, a major drawback of uncoded OFDM signals is that the transmitted signals can suffer from high peak-to-average power ratio (PAPR) which degrades the transmission power efficiency \cite{05-Goldsmith}.

While several signal processing techniques can be used to combat the PAPR issue \cite{00-Nee-Prasad},
an elegant approach to deal with this problem from the coding perspective is to construct codewords with low PAPR, and then apply them to a code-keying OFDM system. In this context, an important work was done by Davis and Jedwab \cite{99-Davis-Jedwab} by constructing codebooks from polyphase Golay complementary sequences (GCSs)\cite{61-Golay} using the algebraic tool of generalized Boolean functions. Therein, two properly chosen polyphase GCSs form a Golay complementary pair (GCP) with zero out-of-phase aperiodic autocorrelation sums, with each GCS producing an OFDM waveform with a PAPR upper bounded by 2 when this GCS is spread over the frequency domain \cite{91-Popovic}, \cite{14-Wang-Gong}.

Apart from their application in PAPR reduction, GCSs and GCPs have been used in many other scenarios such as Doppler resilient radar waveform design \cite{08-Pezeshki-Calderbank}, optimal channel estimation \cite{01-Spasojevic-Georghiades}, \cite{07-Wang-Abdi}, and interference-free multicarrier CDMA \cite{01-Chen-Yeh}, \cite{14-Liu-Guan-Para}, \cite{14-Liu-Guan-Chen}, etc. However, the lengths of these constructed GCPs and GCSs based on Boolean functions are all powers of two. Recently, Chen proposed novel constructions of $q$-ary (for even $q$) complementary sets (CSs) of non-power-of-two length based on generalized Boolean functions in \cite{16-ChenCY}, \cite{17-ChenCY} and \cite{18-ChenCY}. In \cite{17-ZLWang}, they proposed a novel construction
of $q$-ary CSs of non-power-of-two length based on the concatenation
of the sequences in GCPs or CSs. These classes of sequences enjoy low PAPR and good error correction capabilities, but their code-rates over phase-shift keying (PSK) signal constellations become very low with large code lengths.

Extension of polyphase GCSs to Quadrature Amplitude Modulation (QAM) GCSs was proposed in \cite{01-Rossing-Tarokh}. Owing to their larger set size, QAM GCSs exhibit higher code-rates, although with a slight (yet tolerable) increase of PAPR (compared to the polyphase GCSs). In \cite{01-Rossing-Tarokh}, R\"{o}{\ss}ing and Tarokh were the first to construct 16-QAM GCSs with maximum PAPR of 3.6 from the weighted sum of two quaternary GCSs. Subsequently, Chong et al. developed an algebraic construction of 16-QAM GCSs using generalized Boolean functions \cite{03-Chong-Venkataramani}. It was shown in \cite{03-Chong-Venkataramani} that an OFDM system with 16-QAM GCSs can indeed achieve a higher code-rate than that with only binary or quaternary GCSs, given the same PAPR constraint. Later, extension to the constructions of 64-QAM GCSs were reported in \cite{10-Chang-Li} and \cite{06-Lee-Golomb}. In \cite{08-LiY}, some corrections were provided for the sequence pairing descriptions of 16-QAM GCSs \cite{03-Chong-Venkataramani} and 64-QAM GCSs \cite{06-Lee-Golomb}. Furthermore, generalized Case I-III $4^q$-QAM ($q\geq1$) GCSs were investigated in \cite{10-LiY} and Generalized Case IV-V $4^q$-QAM ($q\geq3$) GCSs using selected Gaussian integer pairs were reported in \cite{13-Liu-Li-Guan}.

While the above mentioned works focus on complementary set sequence (CSS) designs with low PAPR property, they do not consider an important practical constraint, i.e., \emph{spectral nulls} in OFDM transmission, as explained next.
In OFDM systems, certain sub-carriers are reserved and are prohibited to transmit signals \cite{IEEE}. For example, the DC sub-carrier is reserved, i.e., ``spectrally nulled", to avoid the offsets in the D/A and A/D converters in RF transmission. Similarly, the guard bands at the spectrum edges are also nulled to prevent interferences to the adjacent sub-carriers \cite{09-IEEE,11-Hamilton-Ma}. Furthermore, it has been reported in the literature that adjusting the positions of the null sub-carriers within the OFDM symbol is beneficial for applications such as accurate CFO estimation \cite{01-Ghogho-Swami,12-Wang-Ho,11-Tsai-2}. More recently, the need for OFDM sequences with spectral null constraint (SNC) (also referred to as \emph{non-contiguous} OFDM \cite{12-Sakran-Nasr}) is motivated by their possible use cases in the Cognitive Radio (CR) communications \cite{09-Mahmoud-Yucek,14-Hu-Liu,18-Liu-Guan}. In OFDM based transmission for CR, the secondary users (SUs) are allowed to transmit only on those sub-carriers which are not occupied by the primary user (PU), thus requiring spectral nulls at specific positions (corresponding to the sub-carriers occupied by the PU) in the SUs' transmitted codewords \cite{15-Ni-Jiang}. Unfortunately, due to the explicit constraints of the spectral nulls in the above application scenarios, the \emph{traditional} CSs (which do not consider SNC in their designs) may not be applicable anymore. Moreover, in the non-contiguous OFDM waveforms for CR, PAPR problem is even worse than the normal OFDM \cite{12-Sakran-Nasr}.  Thus, an intriguing question that follows is how to construct new complementary sets containing sequences which can be used as low PAPR OFDM codewords with spectral nulls at specific positions.

This work attempts to address the above problem by providing a systematic construction of CSs with low PAPR, yet satisfying the spectral constraints for some specific scenarios. In particular, we start by an iterative method of generating mutually orthogonal CSs (MOCSs) and prove that low PAPR CSs with spectral nulls at certain positions can be generated using MOCSs, and then we go on to construct MOCSs with non-power-of-two lengths (in contrast to the very specific power-of-two length sequences) using generalized Boolean functions which is inspired by a recent idea of Chen \cite{16-ChenCY}. The MOCSs designed using the proposed iterative techniques can, in fact, be used to construct low PAPR CSs with spectral nulls at more general positions in the sequences which may be useful in applications mentioned in \cite{01-Ghogho-Swami,12-Wang-Ho}.

The remainder of this paper is organized as follows. Section \ref{sec-pre} gives the preliminaries and the mathematical tools used in the paper. In Section III, we present an iterative method to construct MOCSs under SNC with bounded PAPR from the known sequences. In Section IV,  we introduce a new construction of MOCSs with non-power-of-two lengths based on generalized Boolean functions. In Section V we calculate the code-rate of the codebook of the proposed construction in Section IV. Finally,  Section VI concludes this paper with some remarks.

\section{ Preliminaries}\label{sec-pre}

\subsection{Aperiodic Correlations and Complementarity}

Let $\mathbf{a}=(a(0),a(1),\cdots,a(L-1))$ and $\mathbf{b}=(b(0),b(1),\cdots,b(L-1))$ be two complex-valued sequences of length $L$.  The aperiodic cross-correlation between $\mathbf{a}$ and $\mathbf{b}$ at a time shift $u$ is defined by
\begin{eqnarray*}
 R_{\mathbf{a,b}}(u) &=&\left\{\begin{array}{ll}
\sum_{i=0}^{L-1-u}a(i)b^*(i+u), & 0\leq u\leq L-1; \\
\sum_{i=0}^{L-1+u}a(i-u)b^*(i), &-(L-1)\leq u\leq-1; \\
 0, & |u|\geq L.
   \end{array}\right.
\end{eqnarray*}
It is easily verified that
\begin{eqnarray}\label{eq-u}
R_{\mathbf{a,b}}(u)=R^*_{\mathbf{b,a}}(-u).
\end{eqnarray}
If $\mathbf{a=b}$, $R_{\mathbf{a,b}}(u)$ reduces to the aperiodic auto-correlation of $\mathbf{a}$ and will be written as $R_\mathbf{a}(u)$ for simplicity.

\begin{Definition}[\cite{72-Tseng-Liu}]
Let ${\mathcal{A}}=\{\mathbf{a}_{i}\}_{i=1}^{N}$ be a set of $N$ complex-valued sequences of length $L$.  It is said to be a complementary set (CS) of size $N$ if $\sum_{i=1}^NR_{\mathbf{a}_{i}}(u)=0$ for any  $u\neq0$, and any sequence in this set is called a complementary set sequence (CSS). In particular, when $N=2$, the set is called a Golay complementary pair (GCP), and any sequence in this pair is called a Golay complementary sequence (GCS), or complementary sequence.
\end{Definition}

\begin{Definition}
Two sequence sets
${\mathcal{S}}_1=\{\mathbf{s}_{1,1},\mathbf{s}_{1,2},
  \cdots,\mathbf{s}_{1,N}\}$
and
${\mathcal{S}}_2=\{\mathbf{s}_{2,1},\mathbf{s}_{2,2},
  \cdots,\mathbf{s}_{2,N}\}$
are said to be mutually orthogonal if
\begin{eqnarray*}
\sum_{j=1}^{N}R_{\mathbf{s}_{1,j},\mathbf{s}_{2,j}}(u) =0, && \textrm{for~~all~}~~-L+1\le u\le L-1.
\end{eqnarray*}
\end{Definition}

\begin{Definition}
Let ${\mathcal{S}}=\{{\mathcal{S}}_1,{\mathcal{S}}_2,\cdots,{\mathcal{S}}_M\}$, where each ${\mathcal{S}}_c~(1\leq c\leq M)$ is a complementary set consisting of $N$ length-$L$ sequences.  ${\mathcal{S}}$ is called an $(M,N,L)$-MOCS (mutually orthogonal complementary set) if
\begin{eqnarray*}
\sum_{j=1}^NR_{\mathbf{s}_{i,j},\mathbf{s}_{k,j}}(u) =0, ~~\forall~~1\le i\ne k\le M  ~\textrm{and}~ -L+1\le u\le L-1.
\end{eqnarray*}
{
\begin{Remark}\label{rem-han}
It turns out in Theorem 4 of \cite{11-Han} that $M\leq N$ for any $(M,N,L)$-MOCS.
An MOCS with $M=N$ is also called a complete complementary code (CCC). Thanks to a systematic framework in \cite{11-Han} (see Theorems 5 and 7 in \cite{11-Han}), an $(N,N,L)$-MOCS can be obtained from a unitary-like matrix of order $N$, where $L$ is an arbitrary number whose factors are not greater than $N$.
\end{Remark}}
\end{Definition}


\subsection{Generalized Boolean Functions}

Let $q$ be a positive integer and $\mathbb{Z}_q=\{0,1,\cdots,q-1\}$ denote the set of integers modulo $q$.
For $\mathbf{\underline{x}}=(x_1,x_2,\cdots,x_m)\in\mathbb{Z}_2^m$, a generalized Boolean function $f(\mathbf{\underline{x}})$ is defined as a mapping $f$ from $\{0,1\}^m$ to $\mathbb{Z}_q$.
Given $f(\mathbf{\underline{x}})$, define
\begin{eqnarray}\label{eq-GBF}
  \mathbf{f}=(f(\underline{0}),f(\underline{1}),...,
  f(\underline{2^{m}-1})),
\end{eqnarray}
where
$f(\underline{i})=f(i_1,i_2,\cdots,i_{m})$, and $(i_1,i_2,\cdots,i_{m})$ is the binary representation of $i=\sum_{k=1}^mi_k2^{k-1}$ with $i_{m}$ denoting the most significant bit.

In this paper, we consider the  truncated version of the sequence $\mathbf{f}$ above. Specifically,  let  $\mathbf{f}^{(L)}$ be a sequence of length $L$ obtained from $\mathbf{f}$ by ignoring the last $2^m-L$ elements of the sequence $\mathbf{f}$. That is, $\mathbf{f}^{(L)}=(f(\underline{0}),f(\underline{1}),
\cdots,f(\underline{L-1}))$ is a sequence of length $L$.  Let $\xi=\exp(2\pi\sqrt{-1}/q)$ be a primitive $q$-th complex root of unity. One can naturally associate a complex-valued sequence $\psi(\mathbf{f}^{(L)})$ of length $L$ with $\mathbf{f}^{(L)}$ as
\begin{eqnarray}\label{eq-fL}
  \psi(\mathbf{f}^{(L)}) &:=&(\xi^{f(\underline{0})},\xi^{f(\underline{1})},
  \cdots,\xi^{f(\underline{L-1})}).
\end{eqnarray}
From now on, whenever the context is clear, we ignore the superscript of $\mathbf{f}^{(L)}$ unless the sequence length is specified.

\subsection{PAPR of OFDM Symbol}

Let us consider an $L$-sub-carrier OFDM system. The transmitted OFDM signal is the real part of the complex envelope, which can be written as
\begin{eqnarray}\label{st}
S_\mathbf{a}(t)= \sum_{i=0}^{L-1}a(i)e^{2\pi(f_{c}+i{\Delta}f)t\sqrt{-1}},~ 0\leq t<T,
\end{eqnarray}
where $f_{c}$ denotes the carrier frequency and ${\Delta}f=\frac{1}{T}$ denotes the sub-carrier spacing, with $T$ being the OFDM symbol duration. The sequence $\mathbf{a}=(a(i))$ of length $L$ is called the modulating sequence of the OFDM symbol.

The instantaneous power of an OFDM sequence (codeword) is given by $P_\mathbf{a}(t)=|S_\mathbf{a}(t)|^{2}$.
The PAPR of the OFDM sequence is then defined as:
\begin{eqnarray}
\hbox{PAPR}(\mathbf{a}) =\frac{\underset{t\in[0,T)}{\sup} P_{\mathbf{a}}(t)}{P_{av}({\mathbf{a}})},
\end{eqnarray}
where $P_{av}({\mathbf{a}})$ is the average power of $\mathbf{a}$.
Accordingly, the PAPR of a sequence set ${\mathcal{A}}=\{\mathbf{a}_1,\mathbf{a}_2,\cdots,
\mathbf{a}_N\}$ is  defined as
\begin{eqnarray*}
 \hbox{PAPR}({\mathcal{A}}) &=& \underset{\mathbf{a}_i\in {\mathcal{A}}}{\max}~\hbox{PAPR}(\mathbf{a}_i).
\end{eqnarray*}

Define the code-rate of a code-keying OFDM as $R(\mathcal{C}):=\frac{\log_q|\mathcal{C}|}{L}$, where $q$ is the constellation size, $|\mathcal{C}|$ and $L$ denote the set size of codebook $\mathcal{C}$ and the codeword length (or the number of sub-carriers) respectively.

\section{An Iterative Construction of CSs under Spectral Null Constraints}

In this section, we first recall an upper bound on the PAPR of CS sequences, and then present an iterative method to construct CSs under spectral null constraints (SNCs) with low PAPR property.

\subsection{An Upper Bound on PAPR of CSs}
The following bound due to Liu and Guan will be used to estimate PAPR of
the proposed CSs in the sequel.

\begin{Lemma}[Lemma 2 of \cite{16-Liu-Guan}]\label{thm_1} 
Let ${\mathcal{A}}$ be a CS of size $N$ in which all the sequences have the same length and energy. Then the PAPR of ${\mathcal{A}}$ is upper bounded by $N$.
\end{Lemma}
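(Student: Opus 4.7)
The plan is to expand the instantaneous power $|S_{\mathbf{a}_i}(t)|^{2}$ as a Fourier-like sum over aperiodic autocorrelations, and then sum over $i$ so that the defining complementary property collapses everything to a constant. First I would write $|S_{\mathbf{a}_i}(t)|^{2} = S_{\mathbf{a}_i}(t)\overline{S_{\mathbf{a}_i}(t)}$ and expand using \eqref{st}. The carrier phase $e^{2\pi f_c t\sqrt{-1}}$ cancels against its conjugate, leaving a double sum over indices $k$ and $l$ with factors $a_i(k)\overline{a_i(l)}\,e^{2\pi(k-l)\Delta f\, t\sqrt{-1}}$. Regrouping by the time shift $u=k-l$ gives
\begin{equation*}
|S_{\mathbf{a}_i}(t)|^{2} \;=\; \sum_{u=-(L-1)}^{L-1} R^{*}_{\mathbf{a}_i}(u)\, e^{2\pi u \Delta f\, t \sqrt{-1}}.
\end{equation*}

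Next I would sum this identity over $i=1,2,\ldots,N$ and invoke the complementary condition $\sum_{i=1}^{N} R_{\mathbf{a}_i}(u)=0$ for every $u\neq 0$ (taking conjugates, the same holds for $R^{*}_{\mathbf{a}_i}(u)$). Only the $u=0$ term survives, contributing the common energy $E = R_{\mathbf{a}_i}(0)$ for each $i$, so that
\begin{equation*}
\sum_{i=1}^{N} |S_{\mathbf{a}_i}(t)|^{2} \;=\; NE \qquad \text{for all } t\in[0,T).
\end{equation*}
Since each summand is non-negative, this immediately yields $|S_{\mathbf{a}_j}(t)|^{2}\le NE$ for every $j$ and $t$.

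Finally, I would observe that the average power of each OFDM symbol, computed as $\tfrac{1}{T}\int_{0}^{T}|S_{\mathbf{a}_j}(t)|^{2}\,dt$, reduces to the sequence energy $E$ by the orthogonality of the sub-carriers over $[0,T)$. Dividing the uniform bound on instantaneous power by $E$ gives $\textrm{PAPR}(\mathbf{a}_j)\le N$ for every $j$, and taking the maximum over $\mathbf{a}_j\in\mathcal{A}$ delivers $\textrm{PAPR}(\mathcal{A})\le N$. There is no real obstacle here: the only delicate step is the bookkeeping in the first identity (being careful about the sign of $u$ and the conjugation convention used in the definition of $R_{\mathbf{a},\mathbf{b}}(u)$), after which the complementary property does all the work.
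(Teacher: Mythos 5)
Your proof is correct; note that the paper itself gives no proof of this lemma, merely importing it as Lemma~2 of the cited Liu--Guan reference, and your argument is exactly the standard one used there (and for the $N=2$ Golay case by Popovi\'c and Davis--Jedwab): expand $|S_{\mathbf{a}_i}(t)|^2$ into the aperiodic autocorrelations, sum over the set so the complementary property leaves only the $u=0$ term $NE$, and divide by the per-sequence average power $E$. The only bookkeeping point, which you already flag, is the sign/conjugation convention in writing $|S_{\mathbf{a}_i}(t)|^{2}=\sum_{u}R^{*}_{\mathbf{a}_i}(u)e^{2\pi u\Delta f t\sqrt{-1}}$, and it is immaterial since the complementary condition is preserved under conjugation.
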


\subsection{An Iterative Construction of CSs under SNCs}
%
%
%
%
%

In this part, we provide an iterative method to construct CSs under SNCs with low PAPRs from MOCSs, and the position of the spectrum null can be not only in the middle but also symmetrical.
\begin{enumerate}
\item[Step 1:] Select an $(M,N,L)$-MOCS: $\mathcal{W}=\{\mathcal{W}_1,\mathcal{W}_2,
     \cdots,\mathcal{W}_M\}$, where $\mathcal{W}_i=\{\mathbf{w}_{i,j}\}_{j=1}^N$ and
     \begin{eqnarray*}
      \mathbf{w}_{i,j}=\{w_{i,j}(0),w_{i,j}(1),
      \cdots,w_{i,j}(L-1)\}, && 1\leq i\leq M,~1\leq j\leq N.
     \end{eqnarray*}
\item[Step 2:]
Let $M_1=\lfloor\frac{M}{2}\rfloor$, $b_1$ be any given non-negative integer, and $L_1=2L+b_1$. Obtain the following sequence set
\begin{eqnarray*}
 \mathcal{W}^{(1)}_i=\{\mathbf{w}^{(1)}_{i,j}\}_{j=1}^N, ~~~1\leq i\leq M_1
\end{eqnarray*}
from $\mathcal{W}_i$ and $\mathcal{W}_{i+M_1}$, where
\begin{eqnarray*}
  \mathbf{w}^{(1)}_{i,j}=(w^{(1)}_{i,j}(0), w^{(1)}_{i,j}(1),\cdots,w^{(1)}_{i,j}(L_{1}-1))
\end{eqnarray*}
and
\begin{eqnarray}\label{eq-iter}
w^{(1)}_{i,j}(t)&=& \left\{ \begin{array}{ll}
w_{i,j}(t), & t=0,1,\cdots,L-1;\\
0,&t=L,\cdots,L+b_1-1;\\
w_{i+M_{1},j}(t-(L+b_1)),& t=L+b_1,\cdots,2L+b_1-1.
\end{array}\right.
\end{eqnarray}
~~~~~~~~~~~~~~~~~~~~~~~~~~~~~~~~~~~~~~~~~~~~~~~~~\vdots

\item[Step $k$:]
Let $M_{k-1}=\lfloor\frac{M}{2^{k-1}}\rfloor\geq1$ where $k\geq3$, $b_{k-1}$ be any given non-negative integer,  and $L_{k-1}=2L_{k-2}+b_{k-1}$. Generate a sequence set
$$
\mathcal{W}^{(k-1)}_i=\{\mathbf{w}^{(k-1)}_{i,j}\}_{j=1}^N, ~~1\leq i\leq M_{k-1}
$$
of sequence length $L_{k-1}=2L_{k-2}+b_{k-1}$ from  $\mathcal{W}^{(k-2)}_i$ and $\mathcal{W}^{(k-2)}_{i+M_{k-1}}$ as in (\ref{eq-iter}).
\end{enumerate}

\begin{Theorem}\label{MO}
The set $\mathcal{W}^{(1)}=\{\mathcal{W}^{(1)}_1,\mathcal{W}^{(1)}_2
,\cdots,\mathcal{W}^{(1)}_{M_1}\}$ is an $(M_1,N,2L+b_1)$-MOCS.
\end{Theorem}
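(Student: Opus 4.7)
The plan is to verify, directly from the definitions, both (i) that each $\mathcal{W}^{(1)}_i$ is a complementary set and (ii) that $\mathcal{W}^{(1)}_i$ and $\mathcal{W}^{(1)}_k$ are mutually orthogonal whenever $i\neq k$. Both facts will follow from one correlation identity for the concatenated sequences $\mathbf{w}^{(1)}_{i,j}=(\mathbf{w}_{i,j},\mathbf{0}_{b_1},\mathbf{w}_{i+M_1,j})$, so the main task is to derive that identity and then invoke the MOCS hypothesis on $\mathcal{W}$.

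Abbreviate $\mathbf{a}=\mathbf{w}_{i,j}$, $\mathbf{b}=\mathbf{w}_{i+M_1,j}$, $\mathbf{a}'=\mathbf{w}_{k,j}$, $\mathbf{b}'=\mathbf{w}_{k+M_1,j}$, $\mathbf{c}=\mathbf{w}^{(1)}_{i,j}$, and $\mathbf{c}'=\mathbf{w}^{(1)}_{k,j}$. In the sum $R_{\mathbf{c},\mathbf{c}'}(u)=\sum_t c(t){c'}^*(t+u)$ I would split the index $t$ by which of the three blocks it lies in (the $\mathbf{a}$-copy on $[0,L-1]$, the zero block on $[L,L+b_1-1]$, or the $\mathbf{b}$-copy on $[L+b_1,2L+b_1-1]$), and likewise for $t+u$. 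Products touching the zero block vanish; the change of variable $s=t+u-L-b_1$ maps the remaining cross terms onto honest cross-correlation sums of $\mathbf{a}$ against $\mathbf{b}'$ at lag $u-L-b_1$ and (via the symmetry $R_{\mathbf{x},\mathbf{y}}(v)=R^{*}_{\mathbf{y},\mathbf{x}}(-v)$ from equation~(\ref{eq-u}) together with a symmetric decomposition when $u<0$) of $\mathbf{b}$ against $\mathbf{a}'$ at lag $u+L+b_1$. The outcome is
\begin{equation*}
R_{\mathbf{c},\mathbf{c}'}(u)=R_{\mathbf{a},\mathbf{a}'}(u)+R_{\mathbf{b},\mathbf{b}'}(u)+R_{\mathbf{a},\mathbf{b}'}(u-L-b_1)+R_{\mathbf{b},\mathbf{a}'}(u+L+b_1),
\end{equation*}
with the convention that any correlation at a lag of absolute value $\ge L$ vanishes, so at most one of the last two terms is active for any given $u$.

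Summing this identity over $j=1,\dots,N$ and invoking the $(M,N,L)$-MOCS hypothesis on $\mathcal{W}$ then finishes the proof. For $i\neq k$ with $1\le i,k\le M_1$, the four index pairs $(i,k)$, $(i+M_1,k+M_1)$, $(i,k+M_1)$, $(i+M_1,k)$ all consist of distinct elements of $[1,M]$, so each of the four resulting $j$-sums is zero by mutual orthogonality; for $i=k$ and $u\neq 0$, the first two $j$-sums vanish by the complementary property of $\mathcal{W}_i$ and $\mathcal{W}_{i+M_1}$, while the remaining two vanish by the mutual orthogonality of $\mathcal{W}_i$ and $\mathcal{W}_{i+M_1}$. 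The main obstacle I anticipate is not this final assembly but the careful bookkeeping of summation limits in the block decomposition: one must check that, near the boundary shifts $u\approx b_1$, $L$, $L+b_1$, the substitution $s=t+u-L-b_1$ genuinely recovers $R_{\mathbf{a},\mathbf{b}'}(u-L-b_1)$ (respectively $R_{\mathbf{b},\mathbf{a}'}(u+L+b_1)$) with the correct upper and lower summation limits and no spurious boundary contributions. Once this is verified cleanly for $u\ge 0$, the $u<0$ case is obtained by conjugation via~(\ref{eq-u}).
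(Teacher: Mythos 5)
Your proposal is correct and follows essentially the same route as the paper: decompose the aperiodic correlation of the concatenated sequences $(\mathbf{w}_{i,j}\,|\,\mathbf{0}_{b_1}\,|\,\mathbf{w}_{i+M_1,j})$ into block correlations of the seed sequences and then invoke the complementarity and mutual orthogonality of $\mathcal{W}$. The only difference is organizational --- your single identity with the convention that correlations vanish at lags of magnitude at least $L$ absorbs in one stroke the paper's separate case analyses over $b_1=0$, $1\leq b_1\leq L-1$, $b_1\geq L$ and over the ranges of $u$, and treats Steps 1 and 2 uniformly; the identity and the final assembly are both correct as stated.
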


\begin{proof}The proof of this theorem is divided into two steps. In the first step, we show that $\mathcal{W}^{(1)}_i~(1\leq i \leq M_1)$ is a CS of length $2L+b_1$. In the second one,  we prove that for $i\neq r$, $\mathcal{W}^{(1)}_i$ and $\mathcal{W}^{(1)}_{r}$ are mutually orthogonal.

\emph{Step 1:} We only need to prove that $\sum_{j=1}^{N}R_{\mathbf{w}^{(1)}_{i,j}}(u)=0$ holds for  $1\leq u\leq2L-1+b_1$.
We distinguish among the following three cases to achieve this goal.
\begin{itemize}
\item For $b_1=0$, from the definition of aperiodic cross-correlation, we have
      \begin{eqnarray}
        R_{\mathbf{w}^{(1)}_{i,j}}(u) &=& \left\{\begin{array}{ll}
R_{\mathbf{w}_{i,j}}(u)+R_{\mathbf{w}_{i+M_1,j},\mathbf{w}_{i,j}}^*
(L-u)+R_{\mathbf{w}_{i+M_1,j}}(u), &0< u\leq L-1; \\
R_{\mathbf{w}_{i,j},\mathbf{w}_{i+M_1,j}}(u-L), & L\leq u\leq 2L-1.\end{array}\right.
      \end{eqnarray}
Note that $\mathcal{W}_i$ and $\mathcal{W}_{i+M_1}$ are CSs, one has
\begin{eqnarray}
\sum_{j=1}^NR_{\mathbf{w}^{(1)}_{i,j}}(u) &=& \left\{\begin{array}{ll}
\sum_{j=1}^NR_{\mathbf{w}_{i+M_1,j},\mathbf{w}_{i,j}}^*
(L-u), &0< u\leq L-1; \\
\sum_{j=1}^NR_{\mathbf{w}_{i,j},\mathbf{w}_{i+M_1,j}}(u-L), & L\leq u\leq 2L-1.\end{array}\right.
     \end{eqnarray}

\item For $1\leq b_1\leq L-1$, we have
\begin{eqnarray}
R_{\mathbf{w}^{(1)}_{i,j}}(u) &=& \left\{\begin{array}{ll}
   R_{\mathbf{w}_{i,j}}(u)+R_{\mathbf{w}_{i+M_1,j}}(u), & 0< u\leq b_1; \\
R_{\mathbf{w}_{i,j}}(u)+R_{\mathbf{w}_{i+M_1,j},\mathbf{w}_{i,j}}^*
(L+b_1-u)+R_{\mathbf{w}_{i+M_1,j}}(u), &b_1< u\leq L-1; \\
 R_{\mathbf{w}_{i+M_1,j},\mathbf{w}_{i,j}}^*(L+b_1-u), & L-1<u\leq L+b_1-1;\\
R_{\mathbf{w}_{i,j},\mathbf{w}_{i+M_1,j}}(u-L-b_1), & L+b_1\leq u\leq 2L+b_1-1.\end{array}\right.
      \end{eqnarray}
This, together with the fact that $\mathcal{W}_i$ and $\mathcal{W}_{i+M_1}$ are CSs, implies that
\begin{eqnarray}
\sum_{j=1}^NR_{\mathbf{w}^{(1)}_{i,j}}(u) &=& \left\{\begin{array}{ll}
   0, & 0< u\leq b_1; \\
\sum_{j=1}^NR_{\mathbf{w}_{i+M_1,j},\mathbf{w}_{i,j}}^*
(L+b_1-u), &b_1< u\leq L-1; \\
 \sum_{j=1}^NR_{\mathbf{w}_{i+M_1,j},\mathbf{w}_{i,j}}^*(L+b_1-u), & L-1<u\leq L+b_1-1;\\
\sum_{j=1}^NR_{\mathbf{w}_{i,j},\mathbf{w}_{i+M_1,j}}(u-L-b_1), & L+b_1\leq u\leq 2L+b_1-1.\end{array}\right.
     \end{eqnarray}

\item For $b_1\geq L$, we have
      \begin{eqnarray}
        R_{\mathbf{w}^{(1)}_{i,j}}(u) &=& \left\{\begin{array}{ll}
   R_{\mathbf{w}_{i,j}}(u)+R_{\mathbf{w}_{i+M_1,j}}(u), & 0< u\leq L-1; \\
   0,&L\leq u\leq b_1;\\
 R_{\mathbf{w}_{i+M_1,j},\mathbf{w}_{i,j}}^*(L+b_1-u), & b_1+1\leq u\leq L+b_1-1;\\
R_{\mathbf{w}_{i,j},\mathbf{w}_{i+M_1,j}}(u-L-b_1), & L+b_1\leq u\leq 2L+b_1-1.\end{array}\right.
\end{eqnarray}
This leads to
\begin{eqnarray}
\sum_{j=1}^NR_{\mathbf{w}^{(1)}_{i,j}}(u) &=& \left\{\begin{array}{ll}
   0, & 0< u\leq L-1; \\
   0, &L\leq u\leq b_1;\\
 \sum_{j=1}^NR_{\mathbf{w}_{i+M_1,j},\mathbf{w}_{i,j}}^*(L+b_1-u), & b_1+1\leq u\leq L+b_1-1;\\
\sum_{j=1}^NR_{\mathbf{w}_{i,j},\mathbf{w}_{i+M_1,j}}(u-L-b_1), &  L+b_1\leq u\leq 2L+b_1-1.\end{array}\right.
      \end{eqnarray}
\end{itemize}
Combining the cases above and noting that $\mathcal{W}_i$ and $\mathcal{W}_{i+M_1}$ are mutually orthogonal, we have
\begin{eqnarray*}
 \sum_{j=1}^NR_{\mathbf{w}^{(1)}_{i,j}}(u)=0 ~~\textrm{for~~all~}~ 1\leq u\leq2L-1+b_1.
\end{eqnarray*}

\emph{Step 2:}
\begin{itemize}
\item For $ b_1=0$, from the definition of aperiodic cross-correlation, we have
      \begin{eqnarray}
        R_{\mathbf{w}^{(1)}_{i,j},\mathbf{w}^{(1)}_{r,j}}(u) = \left\{\begin{array}{ll}
R_{\mathbf{w}_{i,j},\mathbf{w}_{r,j}}(u)
   +R^*_{\mathbf{w}_{r+M_{1},j},\mathbf{w}_{i,j}}
   (L-u)&\\+R_{\mathbf{w}_{i+M_{1},j},
   \mathbf{w}_{r+M_{1},j}}(u), &0< u\leq L-1; \\
R_{\mathbf{w}_{i,j},\mathbf{w}_{r+M_{1},j}}
   (u-L), &L\leq u\leq2L-1;\\
   R^*_{\mathbf{w}_{r,j},\mathbf{w}_{i,j}}(-u)
   +R_{\mathbf{w}_{i+M_{1},j},\mathbf{w}_{r,j}
   (L+u)}&\\+R^*_{\mathbf{w}_{r+M_{1},j},
   \mathbf{w}_{i+M_{1},j}}(-u), & -L+1\leq u<0; \\
   R^*_{\mathbf{w}_{r,j},\mathbf{w}_{i+M_{1},j}}
   (-u-L), &-2L+1\leq u\leq -L.\\
   \end{array}\right.
      \end{eqnarray}

Since for any $ 1\leq s\ne h\leq N$, $\mathcal{W}_s$ and $\mathcal{W}_h$ are mutually orthogonal, then
\begin{eqnarray}\label{eq-w1}
\sum_{j=1}^NR_{\mathbf{w}^{(1)}_{i,j},\mathbf{w}^{(1)}_{r,j}}(u) =0, &&\textrm{for~~all~}~1-b_1-2L\leq u \leq 2L+b_1-1.
     \end{eqnarray}
  \item For $1\leq b_1\leq L-1$, from the definition of aperiodic cross-correlation, we have
      \begin{eqnarray}\nonumber
        &&R_{\mathbf{w}^{(1)}_{i,j},\mathbf{w}^{(1)}_{r,j}}(u) \\
        &&=\left\{\begin{array}{ll}
   R_{\mathbf{w}_{i,j},\mathbf{w}_{r,j}}(u)+
   R_{\mathbf{w}_{i+M_{1},j},\mathbf{w}_{r+M_{1},j}}(u), & 0\leq u\leq b_1; \\
R_{\mathbf{w}_{i,j},\mathbf{w}_{r,j}}(u)+R^*_{\mathbf{w}_{r+M_{1},j},\mathbf{w}_{i,j}}
   (v)+R_{\mathbf{w}_{i+M_{1},j},\mathbf{w}_
   {r+M_{1},j}}(u), &b_1< u\leq L-1; \\
 R^*_{\mathbf{w}_{r+M_{1},j},\mathbf{w}_{i,j}}(v), & L-1<u\leq L+b_1-1;\\
R_{\mathbf{w}_{i,j},\mathbf{w}_{r+M_{1},j}}
   (u-L-b_1), & L+b_1\leq u\leq2L+b_1-1;\\
   R^*_{\mathbf{w}_{r,j},\mathbf{w}_{i,j}}(-u)+
   R^*_{\mathbf{w}_{r+M_{1},j},\mathbf{w}_{i+M_{1},j}}(-u), & -b_1\leq u< 0; \\
R^*_{\mathbf{w}_{r,j},\mathbf{w}_{i,j}}(-u)
   +R_{\mathbf{w}_{i+M_{1},j},\mathbf{w}_{r,j}}
   (\bar{v})+R^*_{\mathbf{w}_{r+M_{1},j},\mathbf{w}_{i+M_{1},j}}(-u), &1-L\leq u< -b_1; \\
 R_{\mathbf{w}_{i+M_{1},j},\mathbf{w}_{r,j}}(\bar{v}), & 1-L-b_1\leq u<1-L;\\
R^*_{\mathbf{w}_{r,j},\mathbf{w}_{i+M_{1},j}}
   (-u-L-b_1), & 1-2L-b_1\leq u\leq -L-b_1.\end{array}\right.
      \end{eqnarray}
   where $v=L+b_1-u$ and $\bar{v}=L+b_1+u$.

Similar to Case 1, (\ref{eq-w1}) holds for $1\leq b_1\leq L-1$.
\item  For $b_1\geq L$, from the definition of aperiodic cross-correlation, we have
      \begin{eqnarray}
        R_{\mathbf{w}^{(1)}_{i,j},\mathbf{w}^{(1)}_{r,j}}(u) &=& \left\{\begin{array}{ll}
  R_{\mathbf{w}_{i,j},\mathbf{w}_{r,j}}(u)+R_{\mathbf{w}_{i+M_{1},j},\mathbf{w}_{r
   +M_{1},j}}(u), & 0\leq u\leq L-1; \\
   0, & L\leq u\leq b_1;\\
 R^*_{\mathbf{w}_{r+M_{1},j},\mathbf{w}_{i,j}}(L+b_1-u), & b_1+1\leq u\leq L+b_1-1;\\
R_{\mathbf{w}_{i,j},\mathbf{w}_{r+M_{1},j}}(u-L-b_1), & L+b_1\leq u\leq2L+b_1-1;\\
    R^*_{\mathbf{w}_{r,j},\mathbf{w}_{i,j}}(-u)+
   R^*_{\mathbf{w}_{r+M_{1},j},\mathbf{w}_{i+M_{1},j}}(-u), &1-L\leq u<0; \\
   0, & -b_1\leq u\leq -L;\\
 R_{\mathbf{w}_{i+M_{1},j},\mathbf{w}_{r,j}}(L+b_1+u), & 1-L-b_1\leq u\leq -1-b_1;\\
R^*_{\mathbf{w}_{r,j},\mathbf{w}_{i+M_{1},j}}
   (-u-L-b_1), & 1-2L-b_1\leq u\leq-L-b_1.\end{array}\right.
      \end{eqnarray}
Similar to Case 1, (\ref{eq-w1}) holds when $b_1\geq L$.
\end{itemize}

Combining the cases above, it can be concluded that for $i\neq r$, $\mathcal{W}^{(1)}_i$ and $\mathcal{W}^{(1)}_{r}$ are mutually orthogonal.
\end{proof}

The following corollary is a direct result of Theorem \ref{MO}.
\begin{Corollary}\label{wk}
For any $k\geq3$ and $M\geq 2^k$, $\mathcal{W}^{(k-1)}=\{\mathcal{W}^{(k-1)}_1,\mathcal{W}^{(k-1)}_2
,\cdots,\mathcal{W}^{(k-1)}_{M_{k-1}}\}$ is an $(M_{k-1},N,2^{k-1}L+\sum_{i=1}^{k-1}2^{{k-1}-i}b_i)$-MOCS.
\end{Corollary}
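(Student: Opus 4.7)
The plan is to proceed by induction on $k$, with Theorem \ref{MO} serving as the engine. The base case is $k=2$, which is exactly Theorem \ref{MO}: given the $(M,N,L)$-MOCS $\mathcal{W}$, it produces the $(M_1,N,2L+b_1)$-MOCS $\mathcal{W}^{(1)}$, and this matches the claimed parametrization $(M_1,\,N,\,2^{1}L+\sum_{i=1}^{1}2^{1-i}b_i)$. For the inductive step, suppose $\mathcal{W}^{(k-2)}$ is already an $(M_{k-2},N,L_{k-2})$-MOCS with $L_{k-2}=2^{k-2}L+\sum_{i=1}^{k-2}2^{k-2-i}b_i$. By Step $k$ of the construction, $\mathcal{W}^{(k-1)}$ is obtained from $\mathcal{W}^{(k-2)}$ by exactly the merging rule (\ref{eq-iter}), only with the roles of $(M,L,b_1)$ replaced by $(M_{k-2},L_{k-2},b_{k-1})$. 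Applying Theorem \ref{MO} to this input immediately gives that $\mathcal{W}^{(k-1)}$ is an $(\lfloor M_{k-2}/2\rfloor,\,N,\,2L_{k-2}+b_{k-1})$-MOCS.

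It then remains to check the two bookkeeping identities $\lfloor M_{k-2}/2\rfloor=M_{k-1}$ and $2L_{k-2}+b_{k-1}=2^{k-1}L+\sum_{i=1}^{k-1}2^{k-1-i}b_i$. The first follows from the standard fact $\lfloor\lfloor x/2^{k-2}\rfloor/2\rfloor=\lfloor x/2^{k-1}\rfloor$ applied to $x=M$. The second is obtained by doubling the inductive formula for $L_{k-2}$ and adding $b_{k-1}$:
\begin{equation*}
2L_{k-2}+b_{k-1}=2^{k-1}L+\sum_{i=1}^{k-2}2^{k-1-i}b_i+b_{k-1}=2^{k-1}L+\sum_{i=1}^{k-1}2^{k-1-i}b_i.
\end{equation*}
The hypothesis $M\geq 2^k$ guarantees $M_{k-1}=\lfloor M/2^{k-1}\rfloor\geq 2$, so $M_{k-2}\geq 2M_{k-1}$ and the pairing $(\mathcal{W}^{(k-2)}_i,\mathcal{W}^{(k-2)}_{i+M_{k-1}})$ in Step $k$ is well-defined for every $1\leq i\leq M_{k-1}$; the induction therefore does not collapse prematurely.

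The only subtle point, and the place where I would be most careful, is this bookkeeping: one must make sure that the halving of the index $M$ and the ``double plus gap'' recursion for $L$ compose correctly through the iteration, and that the index $i+M_{k-1}$ never exceeds $M_{k-2}$. Once these are laid out, no additional aperiodic correlation computations are needed beyond those already carried out in the proof of Theorem \ref{MO}, and the corollary follows at once.
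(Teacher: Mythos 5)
Your proof is correct and follows exactly the route the paper intends: the paper states this corollary as ``a direct result of Theorem~\ref{MO}'' without further argument, and your induction on $k$ (applying Theorem~\ref{MO} at each step with $(M,L,b_1)$ replaced by $(M_{k-2},L_{k-2},b_{k-1})$, plus the floor-function and length bookkeeping) is precisely the formalization of that claim.
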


\begin{Remark}
  Note that when $M_{k-1}=1$, $\mathcal{W}^{(k-1)}$ in Corollary \ref{wk} is reduced to a CS of size $N$ with length $2^{k-1}L+\sum_{i=1}^{k-1}2^{k-1-i}b_i$. By Step 2 in the iterative construction we obtained  CSs with spectral nulls (zeros) only at the center of the sequences.
However, after $(k-1)$-th iteration we can get CSs (by Theorem~\ref{wk}) with spectral nulls at various positions (other than the center) within the sequence.
\end{Remark}

\begin{Example}
Let $\mathcal{W}=\{\mathcal{W}_1,\mathcal{W}_2,\mathcal{W}_3,
\mathcal{W}_4\}$ be a $(4,4,4)$-MOCS, where
\begin{eqnarray*}
      &&\mathcal{W}_1 =\{(1,1,1,1),(1,1,-1,-1),(-1,1,-1,1),(-1,1,1,-1)\}, \\
      && \mathcal{W}_2=\{(-1,1,-1,1),(-1,1,1,-1),(1,1,1,1),(1,1,-1,-1)\},  \\
      && \mathcal{W}_3 =\{(-1,-1,1,1),(-1,-1,-1,-1),(1,-1,-1,1),(1,-1,1,-1)\},   \\
      && \mathcal{W}_4 =\{(1,-1,-1,1),(1,-1,1,-1),(-1,-1,1,1),(-1,-1,-1,-1)\}.
    \end{eqnarray*}
     Let $b_1=1$ and
    \begin{eqnarray*}
      \mathcal{W}_1^{(1)}&=& \{(1,1,1,1,0,-1,-1,1,1),(1,1,-1,-1,0,-1,-1,-1,-1), \\
      &&(-1,1,-1,1,0,1,-1,-1,1),(-1,1,1,-1,0,1,-1,1,-1)\},\\
      \mathcal{W}_2^{(1)}&=&\{(-1,1,-1,1,0,1,-1,-1,1),(-1,1,1,-1,0,1,-1,1,-1),\\
      &&(1,1,1,1,0,-1,-1,1,1),(1,1,-1,-1,0,-1,-1,-1,-1)\}.
    \end{eqnarray*}
   By Theorem \ref{MO}, $\mathcal{W}^{(1)}=\{\mathcal{W}_1^{(1)},
   \mathcal{W}_2^{(1)}\}$ is a $(2,4,9)$-MOCS, and the sequence set
\begin{eqnarray*}
      &&\begin{array}{l}
    \{(1,1,1,1,0,-1,-1,1,1,0,0,-1,1,-1,1,0,1,-1,-1,1), \\
   (1,1,-1,-1,0,-1,-1,-1,-1,0,0,-1,1,1,-1,0,1,-1,1,-1), \\
   (-1,1,-1,1,0,1,-1,-1,1,0,0,1,1,1,1,0,-1,-1,1,1), \\
   (-1,1,1,-1,0,1,-1,1,-1,0,0,1,1,-1,-1,0,-1,-1,-1,-1) \}
          \end{array}
    \end{eqnarray*}
 is a CS of length 20.
\end{Example}

{
\begin{Remark}
The above construction is generic in the sense that it works for any  mutually orthogonal CSs.
{According to Theorem \ref{MO} and Corollary \ref{wk}, some known MOCSs in the literature
(see \cite{72-Tseng-Liu, 06-Appuswamy-Chaturvedi, 08-Wu-Spasojevi, 11-Han,18-Das, 18-Das-SPL, 19-Das} for example) can be directly used to construct sequences under SNCs.} It can be seen from Remark \ref{rem-han} that the systematic construction in \cite{11-Han} provides MOCSs with very flexible lengths.  Further, by Lemma \ref{thm_1}, if the sequences in the employed MOCSs have the same energy, the PAPR of the constructed CS is upper bounded by its set size.
\end{Remark}
}

\begin{Remark}
As pointed out by one of the anonymous reviewers, an idea of constructing sequences with zeros based on complementary sequences was reported recently in \cite{18-Ipanov}.
With a careful comparison, the main differences of our paper from  \cite{18-Ipanov} are the following.
\begin{itemize}
\item[1)] The motivation of our paper  is to construct sequences with low PAPR under SNCs for multicarrier communications, while the motivation of \cite{18-Ipanov} is to construct sequences with zero aperiodic correlation zone (ZACZ) for radars. This means that the main concern of our paper is different from  \cite{18-Ipanov}.
\item[2)]In our paper, we employ CSSs in MOCSs as seed sequences, and then insert zeros into two sequences taking from two different CSs to obtain sequences with low PAPR.  In \cite{18-Ipanov}, the authors used GCSs of a GCP as seed sequences, and then inserted zeros into two sequences from the GCP to get sequences with ZACZ.

\item[3)]The number of inserted zeros in our paper is arbitrary which does not depend on the length of the employed CSs, while the number of inserted zeros in  \cite{18-Ipanov} should be a multiple of the length of the employed GCPs.
\end{itemize}
\end{Remark}

To the best of our knowledge, the length of the MOCSs based on generalized Boolean functions in the literature is a power of two. This motivates us to construct MOCSs with non-power-of-two lengths based on generalized Boolean functions in the next section.

\section{MOCSs with non-power-of-two Lengths and Their CSs}

In this section, we introduce a direct construction of MOCSs based on generalized boolean functions, with its sequence length being non-power-of-two and PAPR upper bounded by 4. Before giving the construction of MOCSs, we need the following theorem.

\begin{Theorem}[Theorem 4 of \cite{16-ChenCY}]\label{CL}
Let $q$ be an even integer and  $L=2^{m-1}+2^v$,  where $m$ and $v$ are  integers with $m\geq 2$ and $1\leq v\leq m-1$. Define
\begin{eqnarray*}
 g_1(\mathbf{\underline{x}})&=& \frac{q}{2}\sum_{s=1}^{m-2}x_{\pi(s)}
  x_{\pi(s+1)}+\sum_{s=1}^{m-1}\lambda_{s}
  x_{\pi(s)}x_m+\sum_{s=1}^{m}\mu_{s}
  x_{s}+\mu, \\
  g_2(\mathbf{\underline{x}})&=&g_1(\mathbf{\underline{x}})
  +\frac{q}{2}x_{m}, \\
 g_3(\mathbf{\underline{x}})&=&g_1(\mathbf{\underline{x}})
 +\frac{q}{2}x_{\pi(1)},\\
  g_4(\mathbf{\underline{x}})&=&g_1(\mathbf{\underline{x}})
  +\frac{q}{2}x_{\pi(1)}+\frac{q}{2}x_{m},
\end{eqnarray*}
where $\mathbf{\underline{x}}\in \mathbb{Z}_{2}^{m},~\lambda_s,\mu,\mu_s$ are any given elements in $\mathbb{Z}_q$, and $\pi$ is a permutation of the symbols $\{1,2,\cdots,m-1\}$ with $\{\pi(1),\pi(2),\cdots,\pi(v)\}=\{1,2,\cdots,v\}$. Then $\{\psi(\mathbf{g}_1),\psi(\mathbf{g}_2),\psi(\mathbf{g}_3),
\psi(\mathbf{g}_4)\}$ is a CS of length $L$.
\end{Theorem}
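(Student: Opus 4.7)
The plan is to prove $\sum_{i=1}^{4} R_{\psi(\mathbf{g}_i)}(u) = 0$ for every $1 \leq u \leq L-1$, with $L = 2^{m-1} + 2^{v}$, by adapting the Davis--Jedwab bit-flipping pairing to the truncated window $[0, L-1]$.

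First I would expand the correlation sum and exploit the relations $g_2 - g_1 = \frac{q}{2}x_m$, $g_3 - g_1 = \frac{q}{2}x_{\pi(1)}$, $g_4 - g_1 = \frac{q}{2}(x_m + x_{\pi(1)})$. Since $\xi^{q/2} = -1$, the inner sum over $i$ factors as
\begin{eqnarray*}
\xi^{g_1(\underline{a}) - g_1(\underline{a+u})} \bigl(1 + (-1)^{a_m \oplus (a+u)_m}\bigr) \bigl(1 + (-1)^{a_{\pi(1)} \oplus (a+u)_{\pi(1)}}\bigr),
\end{eqnarray*}
which vanishes unless $a$ and $a+u$ agree in bits $m$ and $\pi(1)$. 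The task thus reduces to showing
\begin{eqnarray*}
\Sigma(u) \;:=\; \sum_{a \in T_u} \xi^{g_1(\underline{a}) - g_1(\underline{a+u})} = 0,
\end{eqnarray*}
with $T_u := \{a \in \{0,1,\ldots,L-1-u\} : a_m = (a+u)_m, \; a_{\pi(1)} = (a+u)_{\pi(1)}\}$.

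The next step is the chain-pairing coming from the quadratic block $\frac{q}{2}\sum_{s=1}^{m-2} x_{\pi(s)} x_{\pi(s+1)}$ inside $g_1$. For each $a \in T_u$, let $k \geq 2$ be the smallest index with $a_{\pi(k)} \neq (a+u)_{\pi(k)}$; such $k$ exists in $\{2, \ldots, m-1\}$ because $u \neq 0$ and both bits $\pi(1)$ and $m$ agree between $a$ and $a+u$. Pair $a$ with $a' := a \oplus 2^{\pi(k-1)-1}$, i.e.\ flip the single bit $\pi(k-1)$. A standard Davis--Jedwab calculation — in which the cross terms $\lambda_s x_{\pi(s)} x_m$ and the linear terms contribute equally to $g_1(\underline{a}) - g_1(\underline{a+u})$ and $g_1(\underline{a'}) - g_1(\underline{a'+u})$, while only the chain summand $\frac{q}{2} x_{\pi(k-1)}(x_{\pi(k-2)} + x_{\pi(k)})$ contributes asymmetrically — yields
\begin{eqnarray*}
\bigl[g_1(\underline{a}) - g_1(\underline{a+u})\bigr] - \bigl[g_1(\underline{a'}) - g_1(\underline{a'+u})\bigr] \equiv \frac{q}{2} \pmod{q},
\end{eqnarray*}
so the two paired contributions to $\Sigma(u)$ cancel.

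The main obstacle, and the reason for the hypothesis $\{\pi(1), \ldots, \pi(v)\} = \{1, \ldots, v\}$, is to verify that $a \mapsto a'$ is a well-defined fixed-point-free involution on $T_u$ compatible with the truncation. An index lies in $[0, L-1]$ precisely when either its bit $m$ equals $0$, or its bit $m$ equals $1$ with bits $v+1, \ldots, m-1$ all zero. If $a_m = 0$, flipping any bit in $\{1, \ldots, m-1\}$ keeps $a'$ inside $[0, 2^{m-1} - 1] \subseteq [0, L-1]$, and one separately checks $a' + u \in [0, L-1]$. If $a_m = 1$, then $a$ and $a+u$ both have bits $v+1, \ldots, m-1$ equal to zero, so in the permuted chain the first differing position $\pi(k)$ must lie in $\{1, \ldots, v\}$; the hypothesis on $\pi$ forces $k \in \{1, \ldots, v\}$ and hence $\pi(k-1) \in \{1, \ldots, v\}$, so flipping that low-order bit is absorbed harmlessly inside the $2^{v}$-block sitting above $2^{m-1}$. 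Bookkeeping the binary carries that arise when passing from $a$ to $a + u$ and from $a'$ to $a' + u$, in both the $a_m = 0$ and $a_m = 1$ sub-cases, is where the bulk of the technical work sits.
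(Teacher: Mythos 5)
You should note first that the paper never proves this statement itself: it is imported verbatim as Theorem 4 of \cite{16-ChenCY}, so there is no in-paper proof to compare against; the closest analogue is the paper's own proof of Theorem \ref{thm 3}, which uses exactly the technique you propose (the factorization of the four-term correlation sum into $\xi^{g_1(\underline{i})-g_1(\underline{j})}[1+(-1)^{i_{\pi(1)}-j_{\pi(1)}}][1+(-1)^{i_m-j_m}]$, restriction to the index set where both distinguished bits agree, pairing via a single flip at position $\pi(t-1)$ for $t$ the minimal differing index, and the observation that $i_m=j_m=1$ forces $t\le v$ so the flip stays inside the truncated range $[0,2^{m-1}+2^v-1]$). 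Your reconstruction is correct and essentially identical to that argument; the only comment is that the ``binary carry bookkeeping'' you flag at the end is lighter than you suggest, since $a_{\pi(k-1)}=j_{\pi(k-1)}$ guarantees the flip acts on $a$ and on $j=a+u$ with the same sign and without carries, so $j'=a'+u$ and the range checks reduce to the two clean sub-cases you already identified.
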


\begin{Remark}
Note that $i_{\pi(s)}i_m=0$ for $0\leq i< L$ and $s>v$.
  $g_1(\mathbf{\underline{x}})$ in Theorem \ref{CL} can be reduced to
  \begin{eqnarray}\label{eq-g1}
  g_1(\mathbf{\underline{x}})&=& \frac{q}{2}\sum_{s=1}^{m-2}x_{\pi(s)}
  x_{\pi(s+1)}+\sum_{s=1}^{v}\lambda_{s}
  x_{\pi(s)}x_m+\sum_{s=1}^{m}\mu_{s}
  x_{s}+\mu,
  \end{eqnarray}
This observation will be used to calculate the code-rate of the CSSs
generated by Theorem \ref{CL}.
\end{Remark}

Now we state our main result in the following theorem, which is based on Theorem \ref{CL} and (\ref{eq-g1}).
\begin{Theorem}\label{thm 3}
In the context of Theorem \ref{CL} and $m\geq3,~1\leq v<m-1$, let
\begin{eqnarray*}
  f_{1}(\mathbf{\underline{x}}) &=& g_{1}(\mathbf{\underline{x}})+\frac{q}{2}
  x_{(m-1)}+\frac{q}{2}x_mx_{\pi(v)},\\
  f_{2}(\mathbf{\underline{x}}) &=& f_{1}(\mathbf{\underline{x}})+\frac{q}{2}x_{m},\\
   f_{3}(\mathbf{\underline{x}}) &=& f_{1}(\mathbf{\underline{x}})+
   \frac{q}{2}x_{\pi(1)},\\
   f_{4}(\mathbf{\underline{x}}) &=& f_{1}(\mathbf{\underline{x}})+\frac{q}{2}x_{\pi(1)}
   +\frac{q}{2}x_{m}.
\end{eqnarray*}
Then, the sets ${\mathcal{G}}=\{\psi(\mathbf{g}_1),\psi(\mathbf{g}_2),
\psi(\mathbf{g}_3),\psi(\mathbf{g}_4)\}$ and ${\mathcal{F}}=\{\psi(\mathbf{f}_1),\psi(\mathbf{f}_2),
\psi(\mathbf{f}_3),\psi(\mathbf{f}_4)\}$ form a $(2, 4, L)$-MOCS of sequence length $L=2^{m-1}+2^{v}$.
\end{Theorem}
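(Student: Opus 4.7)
The plan is to verify the two conditions that make $\{\mathcal{G},\mathcal{F}\}$ a $(2,4,L)$-MOCS: (a) each of $\mathcal{G}$ and $\mathcal{F}$ is a CS of length $L$, and (b) $\sum_{j=1}^{4}R_{\psi(\mathbf{g}_j),\psi(\mathbf{f}_j)}(u)=0$ for all $|u|\le L-1$.

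For (a), $\mathcal{G}$ is a CS directly by Theorem~\ref{CL}. For $\mathcal{F}$, note that
\[ f_1 = g_1 + \tfrac{q}{2} x_{m-1} + \tfrac{q}{2} x_m x_{\pi(v)}, \]
so the term $\tfrac{q}{2} x_{m-1}$ can be absorbed into the linear coefficient $\mu_{m-1}$ and $\tfrac{q}{2} x_m x_{\pi(v)}$ into $\lambda_v$ of the generic template in Theorem~\ref{CL}. Both shifted coefficients remain in $\mathbb{Z}_q$, so $(f_1,f_2,f_3,f_4)$ realize the four functions $(g_1,g_2,g_3,g_4)$ of Theorem~\ref{CL} with new parameters; hence $\mathcal{F}$ is a CS of length $L$.

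For (b), I would exploit the shared offset structure $g_j=g_1+\tfrac{q}{2}\delta_j$ and $f_j=f_1+\tfrac{q}{2}\delta_j$ with $(\delta_1,\delta_2,\delta_3,\delta_4)=(0,\,x_m,\,x_{\pi(1)},\,x_m+x_{\pi(1)})$. Using $\xi^{q/2}=-1$,
\[ \sum_{j=1}^{4}\xi^{g_j(\underline{i})-f_j(\underline{i+u})} \;=\; \xi^{g_1(\underline{i})-f_1(\underline{i+u})}\bigl[1+(-1)^{i_m+(i+u)_m}\bigr]\bigl[1+(-1)^{i_{\pi(1)}+(i+u)_{\pi(1)}}\bigr], \]
so only indices satisfying $i_m=(i+u)_m$ and $i_{\pi(1)}=(i+u)_{\pi(1)}$ survive. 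The task thus reduces to showing
\[ S_u\;:=\;\sum_{i\in\mathcal{I}_u}\xi^{g_1(\underline{i})-f_1(\underline{i+u})}=0,\qquad \mathcal{I}_u=\{0\le i\le L-1-u:\; i_m=(i+u)_m,\; i_{\pi(1)}=(i+u)_{\pi(1)}\}. \]
To prove $S_u=0$, I would use a Davis--Jedwab pairing: for each $i\in\mathcal{I}_u$ let $\alpha$ be the smallest index in $\{2,\ldots,m-1\}$ such that $i_{\pi(\alpha)}\ne(i+u)_{\pi(\alpha)}$, and pair $i$ with $\tilde{i}=i\oplus 2^{\pi(\alpha-1)-1}$. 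By the minimality of $\alpha$ and the restrictions $i_m=(i+u)_m$, $i_{\pi(1)}=(i+u)_{\pi(1)}$, all contributions from the linear term $\mu_{\pi(\alpha-1)}x_{\pi(\alpha-1)}$, the quadratic term $\tfrac{q}{2}x_{\pi(\alpha-2)}x_{\pi(\alpha-1)}$, and the coupling $\lambda_{\alpha-1}x_{\pi(\alpha-1)}x_m$ agree between $i$ and $i+u$ and cancel; only the quadratic $\tfrac{q}{2}x_{\pi(\alpha-1)}x_{\pi(\alpha)}$ gives a net change of $\pm\tfrac{q}{2}$ modulo $q$, forcing $\xi^{g_1(\underline{\tilde{i}})-f_1(\underline{\tilde{i}+u})}=-\xi^{g_1(\underline{i})-f_1(\underline{i+u})}$.

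The main obstacle I expect is the interaction of this pairing with the truncation to length $L=2^{m-1}+2^v$ and with the extra terms $\tfrac{q}{2}x_{m-1}$ and $\tfrac{q}{2}x_m x_{\pi(v)}$ present in $f_1-g_1$. When $\pi(\alpha-1)=m-1$, the flip also toggles the $\tfrac{q}{2}x_{m-1}$ term; when $\alpha-1=v$ (so $\pi(\alpha-1)=\pi(v)$) and $(i+u)_m=1$, the flip activates the $\tfrac{q}{2}x_m x_{\pi(v)}$ term. In either case an extra $\pm\tfrac{q}{2}$ appears and could annihilate the quadratic contribution. Moreover, when $i$ or $i+u$ sits in the high block $[2^{m-1},L-1]$, the bits $v+1,\ldots,m-1$ are forced to zero, so $\tilde{i}$ or $\tilde{i}+u$ may leave $[0,L-1]$. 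A case split on $(i_m,(i+u)_m)\in\{(0,0),(1,1)\}$ together with sub-cases for $\pi(\alpha-1)\in\{m-1,\pi(v)\}$ will be needed, and in each exceptional case an alternative flipping bit (for instance $\pi(\alpha)$, or a higher index guaranteed to be free) must be selected so that both $\tilde{i},\tilde{i}+u\in[0,L-1]$ and the net phase change remains $\tfrac{q}{2}$. The residual algebra for each case is routine but constitutes the bulk of the proof.
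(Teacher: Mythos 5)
Your strategy is the same as the paper's: $\mathcal{F}$ is a CS because $\frac{q}{2}x_{m-1}$ and $\frac{q}{2}x_mx_{\pi(v)}$ are absorbed into $\mu_{m-1}$ and $\lambda_v$ of the template in Theorem~\ref{CL}; the four cross-correlations are summed and factored as $\xi^{f_1(\underline{i})-g_1(\underline{j})}[1+(-1)^{i_{\pi(1)}-j_{\pi(1)}}][1+(-1)^{i_m-j_m}]$ with $j=i+u$; and the surviving indices are paired by flipping the bit in position $\pi(t-1)$, where $t$ is the first $\pi$-position at which $i$ and $j$ differ. Two gaps remain. First, you never treat $u=0$, where the pairing is vacuous because no differing position exists; the paper handles this as a separate case by evaluating $\sum_{i=0}^{L-1}\xi^{f_1(\underline{i})-g_1(\underline{i})}=\sum_{i=0}^{L-1}(-1)^{i_{m-1}+i_mi_{\pi(v)}}$ directly over the two blocks $[0,2^{m-1})$ and $[2^{m-1},L)$, using $\{\pi(1),\dots,\pi(v)\}=\{1,\dots,v\}$ and $v<m-1$. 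The zero-shift sum is precisely where the added terms earn their keep, so it cannot be left implicit.

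Second, your assessment of the exceptional cases is off in both directions. The difficulties you anticipate for $\frac{q}{2}x_mx_{\pi(v)}$ and for indices leaving $[0,L)$ evaporate once you note, as the paper does, that $i_m=j_m=1$ forces $i,j\in[2^{m-1},L)$ and hence $t\le v$, so $\pi(t-1)\in\{1,\dots,v\}\setminus\{\pi(v)\}$ and the flip stays inside the high block; while for $i_m=j_m=0$ the coupling term vanishes identically and the flip stays below $2^{m-1}$. No alternative flipping bit is ever needed there. Conversely, the case $\pi(t-1)=m-1$ (possible when $i_m=j_m=0$ and $\pi$ does not fix $m-1$) is not ``routine'': the toggle of $\frac{q}{2}x_{m-1}$ contributes an extra $\frac{q}{2}$ that exactly cancels the $\frac{q}{2}$ coming from the quadratic chain, the paired indices then contribute with equal rather than opposite signs, and no choice of flipping bit repairs this. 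For instance, with $q=2$, $m=4$, $v=1$, $\pi(1)=1,\pi(2)=3,\pi(3)=2$ and $\lambda_1=\mu_s=\mu=0$ one computes $\sum_{h=1}^{4}R_{\psi(\mathbf{f}_h),\psi(\mathbf{g}_h)}(2)=16\neq0$. The paper's own Case~2 silently omits this term from its displayed computation; the argument only closes if the added linear term is read as $\frac{q}{2}x_{\pi(m-1)}$, whose position is never flipped since $t-1\le m-2$, or if $\pi$ is restricted to fix $m-1$, as in the paper's worked example.
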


\begin{proof}According to Theorem \ref{CL}, ${\mathcal{G}}$ and ${\mathcal{F}}$ are two CSs of sequence length $L$. Therefore, we only need to show that they are orthogonal.
For any integer $i$ and $u$, let $j=i+u$; also let $(i_{1},i_{2},\cdots,i_{m})$ and $(j_{1},j_{2},\cdots,j_{m})$ be the binary representations of $i$ and $j$, respectively. It is clear that
\begin{eqnarray*}
f_{2}(\underline{i})-g_{2}(\underline{j}) &=& f_{1}(\underline{i})-g_{1}(\underline{j})+
\frac{q}{2}(i_{m}-j_{m}), \\
f_{3}(\underline{i})-g_{3}(\underline{j}) &=& f_{1}(\underline{i})-g_{1}(\underline{j})+
\frac{q}{2}(i_{\pi(1)}-j_{\pi(1)}), \\
f_{4}(\underline{i})-g_{4}(\underline{j}) &=& f_{1}(\underline{i})-g_{1}(\underline{j})+
\frac{q}{2}(i_{m}-j_{m})+\frac{q}{2}(i_{\pi(1)}-j_{\pi(1)}).
\end{eqnarray*}
In order to show that ${\mathcal{G}}$ and ${\mathcal{F}}$ are orthogonal,  it is only necessary to prove
\begin{eqnarray}\label{eqn-condition}
\sum_{h=1}^{4}R_{\psi(\mathbf{f}_h),\psi(\mathbf{g}_h)}(u) &=&
\sum_{i=0}^{L-1-u}\xi^{f_1(\underline{i})-g_1(\underline{j})}
[1+(-1)^{i_{\pi(1)}- j_{\pi(1)}}][1+(-1)^{i_m-j_m}]
\end{eqnarray}
for $-L+1\leq u\leq L-1$.

\emph{Case 1: $u=0$}. In this case,  we have $i=j$ and (\ref{eqn-condition}) is reduced as
\begin{eqnarray*}
\sum_{h=1}^{4}R_{\psi(\mathbf{f}_h),\psi(\mathbf{g}_h)}(u)
&=&4\sum_{i=0}^{L-1}\xi^{f_1(\underline{i})-g_1(\underline{i})}\\
&=&4\sum_{i=0}^{L-1}\xi^{\frac{q}{2}i_{m-1}+\frac{q}{2}i_mi_{\pi(v)}}\\
&=&4\sum_{i=0}^{L-1}(-1)^{i_{m-1}+i_mi_{\pi(v)}}\\
&=&4\sum_{i=0}^{2^{m-1}-1}(-1)^{i_{m-1}+0\cdot i_{\pi(v)}}+4\sum_{i=2^{m-1}}^{2^{m-1}+2^{v}-1}(-1)^{0+1\cdot i_{\pi(v)}}\\
&=&4\sum_{i=0}^{2^{m-1}-1}(-1)^{i_{m-1}}+4\sum_{i=0}^{2^{v}-1}(-1)^{ i_{\pi(v)}}\\
&=&0
\end{eqnarray*}
where the last two equalities are due to  $\{\pi(1),\pi(2),\cdots,\pi(v)\}=\{1,2,\cdots,v\}$
and $v<m-1$.

\emph{Case 2:}  $0<u\leq L-1$. In this case,  the set $\{0,1,\cdots,L-1-u\}$ is divided into two disjoint subsets:
\begin{eqnarray*}
J_1(u)&=&\{0\le i\le L-1-u: i_{\pi(1)}\ne j_{\pi(1)}~ or~ j_m\ne i_m\}\\
J_2(u)&=&\{0\le i\le L-1-u: i_{\pi(1)}=j_{\pi(1)}, j_m= i_m\}.
\end{eqnarray*}
Hence (\ref{eqn-condition}) can be written as
\begin{eqnarray}\label{eqn-reduced}
\sum_{h=1}^{4}R_{\psi(\mathbf{f}_h),\psi(\mathbf{g}_h)}(u)&=&\sum_{i\in J_1(u)}\xi^{f_1(\underline{i})-g_1(\underline{j})}
[1+(-1)^{i_{\pi(1)}- j_{\pi(1)}}][1+(-1)^{i_m-j_m}]\nonumber\\
& &+\sum_{i\in J_2(u)}\xi^{f_1(\underline{i})-g_1(\underline{j})}
[1+(-1)^{i_{\pi(1)}- j_{\pi(1)}}][1+(-1)^{i_m-j_m}]\nonumber\\
&=&4\sum_{i\in J_2(u)}\xi^{f_1(\underline{i})-g_1(\underline{j})}.
\end{eqnarray}

For $i\in J_2(u)$, let $t$ be the smallest integer such that $i_{\pi(t)}\ne j_{\pi(t)}$, obviously, $t\ge2$. In particular, if $i_m= j_m=1$, which indicates that $2^{m-1}\leq i,j\leq2^{m-1}+2^v-1$, then we have $t\leq v$; Otherwise, since $\{\pi(1),\pi(2),\cdots,\pi(v)\}=
\{1,2,\cdots,v\}$ which implies $i_{s}=j_{s}$ for $s=1,2,\cdots,v$, then $j\geq i+2^v\geq 2^{m-1}+2^{v}$, and it contradicts the assumption.

Let $i'$ and $j'$ be integers which are different from $i$ and $j$ in only one position $\pi(t-1)$, i.e., $i'_{\pi(t-1)}=1-i_{\pi(t-1)}$ and $j'_{\pi(t-1)}=1-j_{\pi(t-1)}$ respectively, and so $j'=i'+u$ and $i'\in J_2(u)$. Then
\begin{eqnarray*}
f_{1}(\underline{i})-g_{1}(\underline{j})&=& \frac{q}{2}\sum_{s=1}^{m-2}(i_{\pi(s)}i_{\pi(s+1)}
-j_{\pi(s)}j_{\pi(s+1)})
+\sum_{s=1}^{v}\lambda_{s}(i_{\pi(s)}i_{m}-j_{\pi(s)}j_{m})\\
&&+\sum_{s=1}^{m}\mu_{s}(i_{s}-j_{s})+\frac{q}{2}i_{m-1}
+\frac{q}{2}i_mi_{\pi(v)},
\end{eqnarray*}
and
\begin{eqnarray*}
f_{1}(\underline{i}')-g_{1}(\underline{j}')&=& \frac{q}{2}\sum_{s=1}^{m-2}(i'_{\pi(s)}i'_{\pi(s+1)}
-j'_{\pi(s)}j'_{\pi(s+1)})
+\sum_{s=1}^{v}\lambda_{s}(i'_{\pi(s)}i'_{m}
-j'_{\pi(s)}j'_{m})\\
&&+\sum_{s=1}^{m}\mu_{s}(i'_{s}-j'_{s})+\frac{q}{2}i'_{m-1}
+\frac{q}{2}i'_mi'_{\pi(v)}.
\end{eqnarray*}
When $t=2$, it is clear that
\begin{eqnarray*}
f_{1}(\underline{i}')-g_{1}(\underline{j}')&=&
f_{1}(\underline{i})-g_{1}(\underline{j})
+\frac{q}{2}
(1-2i_{\pi(t-1)})i_{\pi(t)}-
\frac{q}{2}(1-2j_{\pi(t-1)})j_{\pi(t)}\\
&&+\lambda_{t-1}(1-2i_{\pi(t-1)})i_{m}-\lambda_{t-1}
(1-2j_{\pi(t-1)})j_{m}\\
&&+\mu_{t-1}(1-2i_{\pi(t-1)})-\mu_{t-1}(1-2j_{\pi(t-1)})\\
&\equiv &f_{1}(\underline{i})-g_{1}(\underline{j})+{q\over 2}~~(\bmod~q).\\
\end{eqnarray*}
On the other hand, when $t\ne2$ we have
\begin{eqnarray*}
f_{1}(\underline{i}')-g_{1}(\underline{j}')&=&f_{1}(\underline{i})-g_{1}(\underline{j})
+\frac{q}{2}i_{\pi(t-2)}(1-2i_{\pi(t-1)})+\frac{q}{2}
(1-2i_{\pi(t-1)})i_{\pi(t)}\\
&&-\frac{q}{2}j_{\pi(t-2)}(1-2j_{\pi(t-1)})-
\frac{q}{2}(1-2j_{\pi(t-1)})j_{\pi(t)}\\
&&+\lambda_{t-1}(1-2i_{\pi(t-1)})i_{m}-\lambda_{t-1}
(1-2j_{\pi(t-1)})j_{m}\\
&&+\mu_{t-1}(1-2i_{\pi(t-1)})-\mu_{t-1}(1-2j_{\pi(t-1)})\\
&\equiv &f_{1}(\underline{i})-g_{1}(\underline{j})+{q\over 2}~~(\bmod~q).
\end{eqnarray*}
This implies
\begin{eqnarray*}
\xi^{f_1(\underline{i})-g_1(\underline{j})}+
\xi^{f_1(\underline{i}')-g_1(\underline{j}')} = 0.
\end{eqnarray*}
When $i$ ranges over $J_2(u)$, so does $i'$. It then follows from (\ref{eqn-reduced}) that
$$
\sum_{h=1}^{4}R_{\psi(\mathbf{f}_h),\psi(\mathbf{g}_h)}(u)
=2\sum_{i\in J_2(u)}[\xi^{f_1(\underline{i})-g_1(\underline{j})}
+\xi^{f_1(\underline{i}')-g_1(\underline{j}')}]=0.
$$

\emph{Case 3:} $-L+1\leq u<0$. According to (\ref{eq-u}), for this case, we have
\begin{eqnarray}\label{eq-(-u)}
  \sum_{h=1}^{4}R_{\psi(\mathbf{g}_h),\psi(\mathbf{f}_h)}(\tau) =\sum_{i=0}^{L-1-\tau}\xi^{g_1(\underline{i})-f_1(\underline{j})}
[1+(-1)^{i_{\pi(1)}- j_{\pi(1)}}][1+(-1)^{i_m-j_m}] =0,
\end{eqnarray}
where $0<\tau\leq L-1$.

By similar arguments as in \emph{Case 2}, we can get (\ref{eq-(-u)}).

This completes the proof of this theorem.
\end{proof}

\begin{Corollary}\label{cor-bound}
With the same notations as in Theorem \ref{thm 3},
\begin{eqnarray*}
{\mathcal{W}}=\{\mathbf{w}_{1},\mathbf{w}_{2},
\mathbf{w}_{3},\mathbf{w}_{4}\}=
\{(\psi(\mathbf{f}_1)|\mathbf{0}_{b}|\psi(\mathbf{g}_1)),
(\psi(\mathbf{f}_2)|\mathbf{0}_{b}|\psi(\mathbf{g}_2)),
(\psi(\mathbf{f}_3)|\mathbf{0}_{b}|\psi(\mathbf{g}_3))
,(\psi(\mathbf{f}_4)|\mathbf{0}_{b}|\psi(\mathbf{g}_4))\}
\end{eqnarray*}
is a CS with sequence length $2^{m}+2^{v+1}+b$, and $\hbox{PAPR}(\mathcal{W})\leq 4$, where $(\psi(\mathbf{f}_i)|\mathbf{0}_{b}|\psi(\mathbf{g}_i))$ is constructed similarly as (\ref{eq-iter}), and $\mathbf{0}_{b}$ is a $b$-zeros sequence.
\end{Corollary}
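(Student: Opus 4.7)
The plan is to recognize that Corollary \ref{cor-bound} is essentially the specialization of Theorem \ref{MO} (Step 2 of the iterative construction) to the $(2,4,L)$-MOCS furnished by Theorem \ref{thm 3}, followed by an invocation of Lemma \ref{thm_1} to get the PAPR bound.

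First I would verify the length. Each of $\psi(\mathbf{f}_i)$ and $\psi(\mathbf{g}_i)$ has length $L = 2^{m-1}+2^v$ by Theorem \ref{thm 3}, so the concatenation $(\psi(\mathbf{f}_i)\,|\,\mathbf{0}_{b}\,|\,\psi(\mathbf{g}_i))$ has length $2L+b=2^m+2^{v+1}+b$, as claimed.

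Next, to show $\mathcal{W}$ is a complementary set of size $4$, I would invoke Theorem \ref{MO} with the following identification: take the input MOCS in the construction to be the $(2,4,L)$-MOCS $\{\mathcal{F},\mathcal{G}\}$ from Theorem \ref{thm 3} (so $M=2$, $N=4$, and $M_1=\lfloor M/2\rfloor=1$), and set $b_1=b$. With $\mathcal{W}_1:=\mathcal{F}$ and $\mathcal{W}_{1+M_1}=\mathcal{W}_2:=\mathcal{G}$, formula (\ref{eq-iter}) produces exactly the sequences $(\psi(\mathbf{f}_j)\,|\,\mathbf{0}_{b}\,|\,\psi(\mathbf{g}_j))$ for $j=1,2,3,4$. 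Theorem \ref{MO} then certifies that this collection forms a $(1,4,2L+b)$-MOCS, which is precisely the assertion that $\mathcal{W}$ is a complementary set of size $4$ and length $2^m+2^{v+1}+b$.

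Finally, for the PAPR bound, I would apply Lemma \ref{thm_1}. That lemma requires the four CSSs in $\mathcal{W}$ to share the same length (already established) and the same energy. Since each non-zero entry of $\mathbf{w}_h$ is a root of unity $\xi^k$ (so contributes modulus $1$) and each sequence contains exactly $2L$ such entries with the remaining $b$ entries equal to zero, every $\mathbf{w}_h$ has energy $2L$. Hence Lemma \ref{thm_1} gives $\hbox{PAPR}(\mathcal{W})\le 4$, completing the proof.

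There is no real obstacle here: the content is already loaded into Theorems \ref{thm 3} and \ref{MO}, so the only thing to check carefully is that the concatenation pattern in Corollary \ref{cor-bound} matches (\ref{eq-iter}) literally when $M_1=1$, and that the energy normalization hypothesis of Lemma \ref{thm_1} is met after inserting $b$ zeros.
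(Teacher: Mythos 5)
Your proposal is correct and follows essentially the same route as the paper's own proof: the paper likewise obtains the CS property by combining Theorem \ref{thm 3} (which supplies the $(2,4,L)$-MOCS $\{\mathcal{F},\mathcal{G}\}$) with the concatenation construction of Theorem \ref{MO}, and then applies Lemma \ref{thm_1} after noting the equal-energy condition. Your version is in fact slightly more careful than the paper's, since you explicitly check the length bookkeeping and the energy $2L$ of each sequence rather than merely asserting them.
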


\begin{proof}
First, by Theorems \ref{MO} and \ref{thm 3}, it can be observed that ${\mathcal{W}}$ is a CS of sequence length $2^{m}+2^{v+1}+b$. Second, for any $\mathbf{w}_i\in {\mathcal{W}}$, it contains $b$ zeros, which implies that the sequences in set ${\mathcal{W}}$ have the same energy. By Lemma \ref{thm_1},  $\hbox{PAPR}(\mathcal{W})\leq 4$.
\end{proof}

\begin{Example}
For $q=2,b=1$, $m=7$, let $\pi$ be the identity permutation of $\{1,2,\cdots,6\}$, and  $g_{1}(\underline{i})=\sum_{k=1}^5i_{k}i_{k+1}$. When $v=1,2,3,4,5$,
it is easy to check that ${\mathcal{G}}$ and ${\mathcal{F}}$ form an MOCS of length $L=66,68,72,80,96$. Then the set ${\mathcal{W}}$ constructed above is a CS of size $4$ with sequence length $133,137,145,161$, $193$, respectively. Table \ref{table-1} lists the PAPRs of these ${\mathcal{W}}$'s for various $v$. In addition,  Table \ref{table-2} lists the PAPRs of ${\mathcal{W}}$ for a fixed $L=66$ while the number of zeros is changed. It is worth noting that the PAPR of the proposed sequence sets is lower than the upper bound in Corollary \ref{cor-bound}. (Note that the oversampling ratio is 8 in our computation.)

\begin{table}[ht]
  \centering
    \caption{The PAPR of ${\mathcal{W}}$ in Example 1 for various $v$}\label{table-1}
\begin{tabular}{|c||c|c|c|c|c|}
  \hline
  Sequence length of $\mathcal{W}$ & 133 &137 & 145 & 161 & 193 \\\hline
  $\hbox{PAPR}(\mathcal{W})$ &  2.5986 & 2.9412 &  3.0956 & 3.6000 &  3.8036 \\
  \hline
\end{tabular}
\end{table}
\begin{table}[ht]\label{table-2}
  \centering
    \caption{The PAPR of $\mathcal{W}$ in Example 1 for various $b$ with $L=66$}\label{table-2}
\begin{tabular}{|c||c|c|c|c|c|}
  \hline
  $b$ & 0 &1 & 2 & 3 & 4 \\\hline
  $\hbox{PAPR}(\mathcal{W})$ & 2.4722 &  2.5986 & 2.5308 &   2.6484 &  2.5168 \\
  \hline
\end{tabular}
\end{table}
\end{Example}
{
\begin{Remark}
Compared with the MOCSs with flexible parameters from unitary-matrices in \cite{11-Han} (see Remark \ref{rem-han}), our construction in Theorem \ref{thm 3} based on generalized Boolean functions
only leads to non-power-of-two length MOCSs with fixed small size ($M=2$).  This means the iterative construction  in Section III to construct new CSs using the MOCSs in Theorem \ref{thm 3} will be ended after Step 2.
%
\end{Remark}
}

\section{The Code-rate and
minimum Hamming distance of proposed CSs}

In this section, first, we explicitly compute the set size of the codebook constructed in \cite{16-ChenCY} (see Theorem~\ref{CL}).  This result can then be immediately applied to calculate the set sizes and hence, the code-rates of our proposed construction also (see Corollary~\ref{cor-coderate}). Next, we provide the numerical values of the code-rates of our proposed codebooks (constructed from $\mathcal{G}$ and $\mathcal{F}$ in Theorem \ref{thm 3}) in a tabular form for various parameters. At last, the minimum Hamming distance of the codebook is discussed.

The following corollary gives the number of the codewords constructed by Theorem \ref{CL}.

\begin{Corollary}\label{thm-codebook}
The set size of the codebook $\mathcal{C}_1$ generated by Theorem \ref{CL} equals $v!(m-1-v)!q^{m+v+1}$.
\end{Corollary}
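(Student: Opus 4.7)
The proof will be a parameter-counting argument. The approach has two components: first, reduce the codebook to a single family of functions by observing that $g_2, g_3, g_4$ are realized by $g_1$'s with shifted linear coefficients; second, enumerate the resulting family and verify that the parameterization is essentially injective.

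To carry out the reduction, I would note that $g_2 = g_1 + \frac{q}{2}x_m$ is exactly the $g_1$ obtained by replacing $\mu_m$ with $\mu_m + \frac{q}{2} \pmod q$; similarly $g_3$ corresponds to shifting $\mu_{\pi(1)}$ by $\frac{q}{2}$, and $g_4$ to shifting both. Since $\mu_m$ and $\mu_{\pi(1)}$ already range over all of $\mathbb{Z}_q$, the codebook $\mathcal{C}_1$ coincides with the family $\{\psi(\mathbf{g}_1)\}$ indexed by all admissible tuples $(\pi, \lambda_1, \ldots, \lambda_v, \mu_1, \ldots, \mu_m, \mu)$ appearing in \eqref{eq-g1}. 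Thus it suffices to count such tuples.

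Counting is a routine product. The block-preserving constraint $\{\pi(1), \ldots, \pi(v)\} = \{1, \ldots, v\}$ splits $\pi$ into a permutation of $\{1, \ldots, v\}$ on the first $v$ positions and a permutation of $\{v+1, \ldots, m-1\}$ on the last $m-1-v$ positions, giving $v!(m-1-v)!$ possibilities. The scalars $(\lambda_s)_{s=1}^{v}$, $(\mu_s)_{s=1}^{m}$ and $\mu$ each range independently over $\mathbb{Z}_q$, contributing $q^{v}$, $q^{m}$ and $q$ respectively, for an overall product of $v!(m-1-v)!\,q^{v+m+1}$.

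The main obstacle is verifying injectivity of the parameterization. The constant $\mu = g_1(\underline{0})$ and the linear coefficients $\mu_s = g_1(e_s) - \mu$ (where $e_s$ is the $s$-th standard unit vector) are recovered directly. The quadratic contribution then splits into (i) the coefficients $\lambda_{\pi^{-1}(i)}$ of $x_i x_m$ for $i \in \{1, \ldots, v\}$, and (ii) the chain $\frac{q}{2} \sum_{s=1}^{m-2} x_{\pi(s)} x_{\pi(s+1)}$, which encodes a Hamiltonian path on the vertex set $\{1, \ldots, m-1\}$. The block constraint forces $\{\pi(v), \pi(v+1)\}$ to be the unique edge of this path between the blocks $\{1, \ldots, v\}$ and $\{v+1, \ldots, m-1\}$; this breaks the reversal symmetry of the path and pins down $\pi$ uniquely, after which $(\lambda_s)$ is read off the $x_i x_m$-coefficients. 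With injectivity established, the product count yields $|\mathcal{C}_1| = v!(m-1-v)!\,q^{v+m+1}$ as claimed.
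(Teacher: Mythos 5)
Your proof is correct on the range $1\le v<m-1$ and is, in substance, the argument the paper compresses into the single sentence ``follows directly from Theorem \ref{CL} and (\ref{eq-g1})'': the paper, too, is simply counting the admissible tuples $(\pi,\lambda_s,\mu_s,\mu)$ in the reduced form (\ref{eq-g1}), with $v!(m-1-v)!$ permutations and $q^{v+m+1}$ scalar choices. What you supply explicitly---and what the paper leaves unsaid---are the two nontrivial steps: that $g_2,g_3,g_4$ contribute no new codewords because each is a $g_1$ with $\mu_m$ and/or $\mu_{\pi(1)}$ shifted by $q/2$ (legitimate since $q$ is even and these coefficients already range over $\mathbb{Z}_q$), and that the parameterization is injective even though the sequences are truncated to length $L=2^{m-1}+2^v$. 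Your injectivity argument is sound where it applies: all the evaluation points you need ($\underline{0}$, $e_s$, $e_i+e_j$ with $i<j\le m-1$, and $e_i+e_m$ with $i\le v$) have indices below $L$, and the unique block-crossing edge $\{\pi(v),\pi(v+1)\}$ does destroy the reversal symmetry of the Hamiltonian path. Note, however, that this step genuinely requires the second block $\{v+1,\dots,m-1\}$ to be nonempty: when $v=m-1\ge 2$ the reversed permutation $\pi'(s)=\pi(m-s)$, paired with $\lambda'_s=\lambda_{m-s}$, yields the identical function $g_1$, the parameterization becomes exactly two-to-one, and the stated count overestimates the codebook size by a factor of $2$. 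Since Theorem \ref{thm 3} and everything downstream use only $v<m-1$, this is a defect in the corollary's statement rather than in your argument, but your proof (correctly) does not cover that boundary case and it is worth saying so.
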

\begin{proof}
The conclusion  follows directly from Theorem \ref{CL} and (\ref{eq-g1}).
\end{proof}

\begin{Remark}\label{re-codebook}
It can be easily verified that when $b$ spectral nulls are considered as in our proposed codeword construction (see Corollary \ref{cor-bound}), the number of codewords is also equal to $v!(m-1-v)!q^{m+v+1}$.
\end{Remark}

The code-rate of our proposed CSs is then given by the following corollary which is a direct result obtained from Corollary \ref{thm-codebook} and Remark \ref{re-codebook}.

\begin{Corollary}\label{cor-coderate}
  In the context of Corollary \ref{cor-bound}, let $\mathcal{C}_2$ denote the codebook constructed by $\mathcal{W}$, then \begin{eqnarray*}
  R(\mathcal{C}_2)=\frac{\left\lfloor\log_q( v!\cdot(m-1-v)!\cdot q^{m+v+1})\right\rfloor}{2^m+2^{v+1}+b}.
\end{eqnarray*}
\end{Corollary}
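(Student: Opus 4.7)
The plan is to recognize that this corollary is essentially a direct substitution into the definition of code-rate, and that all of the nontrivial work has already been done in Corollary \ref{thm-codebook} (for the cardinality of the base codebook) and Corollary \ref{cor-bound} (for the length of the sequences in $\mathcal{W}$). So the proof is a short bookkeeping argument rather than a genuine new derivation.

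First, I would invoke Corollary \ref{cor-bound} to record the length of every codeword of $\mathcal{C}_2$: each sequence in $\mathcal{W}$ has the form $(\psi(\mathbf{f}_i)\,|\,\mathbf{0}_b\,|\,\psi(\mathbf{g}_i))$, where $\psi(\mathbf{f}_i)$ and $\psi(\mathbf{g}_i)$ both have length $L = 2^{m-1}+2^v$ and $\mathbf{0}_b$ contributes $b$ additional zero entries, giving total length
\[
L' \;=\; 2L + b \;=\; 2^m + 2^{v+1} + b.
\]
Next, I would count the distinct codewords. By Corollary \ref{thm-codebook}, the codebook $\mathcal{C}_1$ generated by Theorem \ref{CL} has exactly $v!\,(m-1-v)!\,q^{m+v+1}$ elements, arising from the $v!(m-1-v)!$ admissible permutations $\pi$ (which fix the set $\{1,\dots,v\}$ setwise) together with the free choices of $\lambda_1,\dots,\lambda_v,\mu_1,\dots,\mu_m,\mu \in \mathbb{Z}_q$. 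Since $\mathcal{W}$ is built by prepending $\psi(\mathbf{f}_i)$ and appending $\mathbf{0}_b$ to a codeword of $\mathcal{C}_1$, and since the $\mathbf{f}_i$'s are determined by the same parameters that define the $\mathbf{g}_i$'s (via the relations in Theorem \ref{thm 3}), each parameter choice produces a distinct codeword of $\mathcal{C}_2$. Thus $|\mathcal{C}_2| = v!\,(m-1-v)!\,q^{m+v+1}$, which is precisely the content of Remark \ref{re-codebook}.

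Finally, I would substitute these two quantities into the definition $R(\mathcal{C}) = \log_q|\mathcal{C}|/L$, interpreting the numerator with a floor to account for the fact that the number of $q$-ary information symbols that can actually be mapped injectively into the codebook is $\lfloor \log_q |\mathcal{C}_2|\rfloor$. This yields
\[
R(\mathcal{C}_2) \;=\; \frac{\bigl\lfloor \log_q\bigl(v!\,(m-1-v)!\,q^{m+v+1}\bigr)\bigr\rfloor}{2^m + 2^{v+1} + b},
\]
as claimed. There is no real obstacle here; the only subtlety is justifying the floor, which is a standard convention when the codebook is used for information transmission with integer message lengths. The heavy lifting is entirely contained in the earlier counting argument of Corollary \ref{thm-codebook}, so this corollary does not require any further case analysis.
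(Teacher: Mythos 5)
Your proposal is correct and matches the paper's (essentially unstated) argument: the paper simply declares the corollary to be a direct consequence of Corollary \ref{thm-codebook} and Remark \ref{re-codebook}, i.e., divide $\lfloor\log_q(v!\,(m-1-v)!\,q^{m+v+1})\rfloor$ by the length $2^m+2^{v+1}+b$ from Corollary \ref{cor-bound}. Your write-up just makes the bookkeeping explicit, including the floor convention reconciling the corollary with the earlier floor-free definition of $R(\mathcal{C})$.
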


According to Corollary \ref{cor-coderate}, we can obtain the following code-rate tables.
{\scriptsize
\begin{table*}[ht]
  \centering
\caption{$R(\mathcal{C}_2)$ for various $m,v,b$ when $q=2$}\label{table-code-rate}
\begin{tabular}{|c|c|c|c|c|c|c|c|c|c|c|}
  \hline
  \multirow{2}*{\diagbox[width=6em,trim=l]{$b$}{$R(\mathcal{C}_2)$}{$m,v$}} & $m=3$& \multicolumn{2}{|c|}{$m=4$} & \multicolumn{3}{|c|}{$m=5$} & \multicolumn{4}{|c|}{$m=6$} \\ \cline{2-11}
  & $v=1$& $v=1$ & $v=2$ & $v=1$ & $v=2$ &$v=3$ & $v=1$ & $v=2$ & $v=3$ &$v=4$ \\ \hline
  0&   0.4167 &  0.3500 & 0.3333 &  0.2500 &  0.2500& 0.2292 &  0.1765 &   0.1667 &  0.1625 &   0.1563 \\ \hline
  1&  0.3846 & 0.3333 & 0.3200 &   0.2432 & 0.2439& 0.2245 &  0.1739 &  0.1644 &  0.1605 & 0.1546 \\ \hline
  2&  0.3571 &  0.3182 &  0.3077 &   0.2368&  0.2381&  0.2200 &0.1714&  0.1622&  0.1585 & 0.1531 \\
  \hline
  3&  0.3333 &  0.3043 &  0.2963 &  0.2308 &  0.2326&  0.2157 &  0.1690 & 0.1600 &0.1566 &0.1515\\
  \hline
  4&  0.3125 &  0.2917& 0.2857&  0.2250& 0.2273&  0.2115 & 0.1667 &  0.1579 &0.1548 & 0.1500 \\
  \hline
  5& 0.2941 &  0.2800 & 0.2759& 0.2195&0.2222 & 0.2075 &  0.1644 & 0.1558& 0.1529& 0.1485 \\
  \hline
  6& 0.2778& 0.2692 &0.2667 &0.2143&0.2174 & 0.2037 &0.1622 &0.1538 &0.1512 &0.1471 \\
  \hline
\end{tabular}
\end{table*}
%
%
\begin{table*}[ht]
  \centering
  {
    \caption{$R(\mathcal{C}_2)$ for various $q,m,b$ when $v=1$}\label{table-code-rate q}
  \begin{tabular}{|c|c|c|c|c|c|c|c|c|c|}
  \hline
  \multirow{2}*{\diagbox[width=6em,trim=l]{$q$}{$R(\mathcal{C}_2)$}{$m,b$}} & \multicolumn{2}{|c|}{$m=4$}& \multicolumn{3}{|c|}{$m=5$} & \multicolumn{4}{|c|}{$m=6$} \\ \cline{2-10}
  & $b=1$& $b=2$&$b=1$ & $b=2$ &$b=3$ & $b=1$ & $b=2$ & $b=3$ &$b=4$ \\ \hline
  2&     0.3333 &   0.3182 & 0.2432 & 0.2368 &  0.2308& 0.1739 &  0.1714 &   0.1690 & 0.1667  \\ \hline
  4&    0.2857 & 0.2727 &0.2162 &0.2105 &  0.2051&   0.1449 &   0.1429 &  0.1408 & 0.1389 \\ \hline
  6&   0.2857 & 0.2727 & 0.2162 & 0.2105&   0.2051 &   0.1304 &  0.1286&  0.1268&   0.1250 \\
  \hline
  8&  0.2857& 0.2727 &0.1892 & 0.1842 &0.1795&   0.1304&   0.1286 & 0.1268 &   0.1250 \\
  \hline
  10&  0.2857 &0.2727&0.1892& 0.1842&  0.1795 &   0.1304 &  0.1286& 0.1268 &0.1250 \\
  \hline
\end{tabular}}

\end{table*}
}

Table \ref{table-code-rate} shows that for given $m$ and $v$, the code-rate $R(\mathcal{C}_2)$ decreases when $b$ increases (recall that $b$ is the number of zeros in the sequences). This is because of the  insertion of the zeros (or nulls) which increases the length of the sequences, but the number of codewords remains constant. However, it is worth noting that for large values of $m$, the code-rate does not decrease significantly with $b$. Similarly, we also observe that for a fixed value of $b$, the code-rate decreases  with $m$ and $v$.
Finally, Table \ref{table-code-rate q} shows that for given $m$, $b$ and fixed $v$ ($v=1$ in this case), the code-rate $R(\mathcal{C}_2)$  does not increase with $q$. This is expected because of
$$
R(\mathcal{C}_2)=\frac{(m+v+1)+\lfloor\log_q( v!\cdot(m-1-v)!)\rfloor}{2^m+2^{v+1}+b}.
$$

The minimum Hamming distance of $\mathcal{C}$ is identified by $$d_{min}(\mathcal{C})=min\{d\left(F_1,F_2\right) : F_1\ne F_2,F_1,F_2\in \mathcal{C}\},$$ where $d\left(F_1,F_2\right) = \omega\left(F_1-F_2\right)$, and $\omega\left(F_1-F_2\right)$ denotes the Hamming weight of the vector $F_1-F_2$.
$d_{min}(\mathcal{C}_2)$ might be small due to the small Hamming weight of $\mathbf{x}_m$ when the sequence length $L=2^{m-1}+2^v$ $(1\leq v\leq m-2)$. So we consider a subcode of $\mathcal{C}_2$.

\begin{Definition}
In the context of Theorems \ref{CL} and \ref{thm 3}, let $m\geq3,1\leq v<m-2$ and $b$ is a nonnegative integer, define
\begin{eqnarray*}
\mathcal{C}_3=\left\{\begin{array}{cc}
\left(\psi(\mathbf{f}_1)|\mathbf{0}_b|\psi(\mathbf{g}_1)\right):& \pi~\hbox{is a permutation of the symbols} ~\{1,2,\cdots,m-1\} \\
& \hbox{with}~\{\pi(1),\pi(2),\cdots,\pi(v)\}=\{1,2,\cdots,v\};\\
& \lambda_s=0~(1\leq s\leq v),\mu_m=0;\\
& \mu_s,\mu\in \mathbb{Z}_q~(1\leq s\leq m-1);
\end{array}
\right\}.
\end{eqnarray*}
It is obvious that $\mathcal{C}_3$ is a subcode of $\mathcal{C}_2$ and $|\mathcal{C}_3|=v!(m-1-v)!q^m$ with $$R(\mathcal{C}_3)=\frac{m+\lfloor\log_q( v!\cdot(m-1-v)!)\rfloor}{2^m+2^{v+1}+b}.$$
\end{Definition}

\begin{Lemma}\cite{00-Paterson}\label{lem-RM-Hammingdist}
For $q\geq 2$, the generalized $r$th-order Reed-Muller code, RM$_q(r,m)$, has minimum Hamming distance $2^{m-r}$.
  \end{Lemma}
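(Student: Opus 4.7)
The plan is to prove the two matching bounds $d_{\min}(\mathrm{RM}_q(r,m)) \le 2^{m-r}$ and $d_{\min}(\mathrm{RM}_q(r,m)) \ge 2^{m-r}$ by separate arguments. The upper bound is realized by a single explicit codeword, while the lower bound is obtained by induction on the number of variables $m$ (with $0\le r\le m$ varying), using the standard half/half recursive decomposition of the evaluation vector.

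For the upper bound, I would exhibit the monomial $f(\underline{x}) = x_1 x_2 \cdots x_r$, which has total degree $r$ and therefore lies in $\mathrm{RM}_q(r,m)$. Its evaluation vector takes the value $1$ at precisely the $2^{m-r}$ points satisfying $x_1 = \cdots = x_r = 1$ and vanishes at every other point of $\{0,1\}^m$, so its Hamming weight is exactly $2^{m-r}$, which pins down $d_{\min}(\mathrm{RM}_q(r,m)) \le 2^{m-r}$.

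For the lower bound, the base case $m = 0$ (so $r = 0$) is trivial since any nonzero constant has Hamming weight $1 = 2^{0-0}$. For the inductive step, given any nonzero $f \in \mathrm{RM}_q(r,m)$, I would write $f(\underline{x}) = f_0(x_1,\ldots,x_{m-1}) + x_m \cdot f_1(x_1,\ldots,x_{m-1})$ with $\deg f_0 \le r$ and $\deg f_1 \le r-1$, so that the length-$2^m$ evaluation vector of $f$ is the concatenation of the evaluation vector of $f_0$ (on the half $x_m=0$) with that of $f_0 + f_1$ (on the half $x_m = 1$). I would then split into three cases: (i) $r = m$, in which case $2^{m-r}=1$ and the bound is trivial; (ii) $f_1 \equiv 0$, so that $f_0 \not\equiv 0$ and the induction hypothesis applied to $f_0 \in \mathrm{RM}_q(r,m-1)$ gives $\mathrm{wt}(f_0) \ge 2^{(m-1)-r}$, whence $\mathrm{wt}(f) = 2\,\mathrm{wt}(f_0) \ge 2^{m-r}$; and (iii) $f_1 \not\equiv 0$, in which case the induction hypothesis applied to $f_1 \in \mathrm{RM}_q(r-1,m-1)$ yields $\mathrm{wt}(f_1) \ge 2^{(m-1)-(r-1)} = 2^{m-r}$, and the Hamming-weight triangle inequality $\mathrm{wt}(f_0) + \mathrm{wt}(f_0 + f_1) \ge \mathrm{wt}\bigl((f_0+f_1) - f_0\bigr) = \mathrm{wt}(f_1)$ forces $\mathrm{wt}(f) \ge 2^{m-r}$.

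The main obstacle, such as it is, lies in case (iii): one must avoid trying to control $\mathrm{wt}(f_0)$ and $\mathrm{wt}(f_0+f_1)$ separately (since either can vanish individually) and instead exploit the triangle inequality to recover a lower bound in terms of the more accessible quantity $\mathrm{wt}(f_1)$, whose degree has dropped by one so that the inductive hypothesis kicks in with the correct parameter. The remaining bookkeeping is just to track that case (ii) invokes the induction at $(m-1,r)$ while case (iii) invokes it at $(m-1,r-1)$, so a single induction on $m$ with $r$ allowed to range over $\{0,1,\ldots,m\}$ suffices.
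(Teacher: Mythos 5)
Your proof is correct: the code is $\mathbb{Z}_q$-linear, the monomial $x_1\cdots x_r$ gives the upper bound, and the half/half (Plotkin-type) decomposition $f=f_0+x_m f_1$ together with the weight triangle inequality gives the matching lower bound by induction. The paper itself offers no proof of this lemma --- it is quoted verbatim from Paterson --- and your argument is essentially the standard one used in that cited source, correctly adapted to the $\mathbb{Z}_q$-valued setting.
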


\begin{Proposition}[Propositions 9 and 10 of \cite{16-ChenCY}]\label{lem-Hamming}
In the context of Theorem \ref{CL}, let $m\geq 3,1\leq v\leq m-2$, and
\begin{eqnarray*}
 Q_1=\frac{q}{2}\sum_{s=1}^{m-2}x_{\pi_1(s)}x_{\pi_1(s+1)},&&
 Q_2=\frac{q}{2}\sum_{s=1}^{m-2}x_{\pi_2(s)}x_{\pi_2(s+1)}
\end{eqnarray*}
where $\pi_1,\pi_2$ are any two permutations of the symbols $\{1,2,...,m-1\}$ with
$$\left\{\pi_s(1),\pi_s(2),\cdots,\pi_s(v)\}=\{1,2,\cdots,v\right\}~(s=1,2),$$ also let
\begin{eqnarray*}
\mathcal{S}_1=\left\{Q_1+\sum_{s=1}^{m-1}\mu_sx_s+\mu:
\mu_s,\mu\in\mathbb{Z}_q\right\},~\mathcal{S}_2=\left\{Q_2+\sum_{s=1}^{m-1}\mu_sx_s+\mu:
\mu_s,\mu\in\mathbb{Z}_q\right\},
\end{eqnarray*}
and
\begin{eqnarray*}
\mathcal{C}_{11}=\left\{\psi(\mathbf{f}):f(\underline{\mathbf{x}})\in\mathcal{S}_1\right\},
\mathcal{C}_{12}=\left\{\psi(\mathbf{f}):f(\underline{\mathbf{x}})\in\mathcal{S}_2\right\}.
\end{eqnarray*}
Then, for $\mathcal{\tilde{C}}_1=\mathcal{C}_{11}\bigcup\mathcal{C}_{12}$, we have
\begin{eqnarray*}
 d_{min}(\mathcal{\tilde{C}}_1) &=&\left\{\begin{array}{ll}
                                            2^{m-2}, &\hbox{if}~ \pi_1=\pi_2;  \\
                                            2^{m-3}, & \hbox{if}~ \pi_1\ne\pi_2.
                                          \end{array}\right.
\end{eqnarray*}

\end{Proposition}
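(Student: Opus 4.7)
The plan is to bound $d_{min}(\tilde{\mathcal{C}}_1)$ above and below in each of the two cases, using the generalized Reed--Muller distance bound of the preceding Lemma as the main tool.

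For $\pi_1=\pi_2$, one has $Q_1=Q_2$, so $\mathcal{C}_{11}=\mathcal{C}_{12}$ and $\tilde{\mathcal{C}}_1=\mathcal{C}_{11}$. Any two distinct codewords then differ by $\psi(l)$ with $l=\sum_{s=1}^{m-1}c_s x_s+c_0$ a nonzero $\mathbb{Z}_q$-affine function in $x_1,\ldots,x_{m-1}$. On the first-half indices $0\le i\le 2^{m-1}-1$ the tuple $(i_1,\ldots,i_{m-1})$ sweeps all of $\mathbb{Z}_2^{m-1}$, so the restriction of $l$ there is a nonzero codeword of $\mathrm{RM}_q(1,m-1)$ and carries Hamming weight at least $2^{m-2}$ by the Lemma. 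For tightness, I would take $l=\tfrac{q}{2}x_{m-1}$: it is nonzero on exactly the $2^{m-2}$ first-half positions where $x_{m-1}=1$, while $i_{m-1}=0$ identically on the second half (since $v\le m-2$ forces $m-1\notin\{1,\ldots,v\}$), yielding a pair at distance exactly $2^{m-2}$.

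For $\pi_1\ne\pi_2$, read in the non-degenerate sense $P_1\ne P_2$ with $P_j=\sum_{s=1}^{m-2}x_{\pi_j(s)}x_{\pi_j(s+1)}$, I would split the minimum-distance search into within- and between-coset pairs. Within-coset differences reduce to the previous paragraph and carry weight $\ge 2^{m-2}$. A between-coset difference takes the form $h=Q_1-Q_2+l$ with $l$ affine. The critical reduction is $Q_1-Q_2\equiv\tfrac{q}{2}(P_1\oplus P_2)\pmod q$, where $\oplus$ denotes $\mathbb{F}_2$-addition of polynomials; this follows because $\tfrac{q}{2}\cdot 2\equiv 0\pmod q$ collapses the $\pm 1$ coefficients of $P_1-P_2$ to parity. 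Since $P_1\oplus P_2$ is a nonzero quadratic over $\mathbb{F}_2$, the restriction of $h$ to the first half is a nonzero codeword of $\mathrm{RM}_q(2,m-1)$, of weight at least $2^{m-3}$ by the Lemma, giving $d_{min}(\tilde{\mathcal{C}}_1)\ge 2^{m-3}$.

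For the matching upper bound I would exhibit an explicit pair $(\pi_1,\pi_2)$---for instance, $\pi_2$ differing from $\pi_1$ by swapping consecutive positions $t,t+1$ with $t\ge v+1$ when $v\le m-3$, and by reversing the traversal of the path on $\{1,\ldots,m-2\}$ when $v=m-2$---for which $P_1\oplus P_2$ factors modulo $2$ as a product of two nontrivial $\mathbb{F}_2$-affine forms, with one factor supported entirely in $\{v+1,\ldots,m-1\}$ and hence identically zero on the second half. Taking $l=0$, the resulting $h\equiv\tfrac{q}{2}(P_1\oplus P_2)\pmod q$ is nonzero on exactly $\tfrac14\cdot 2^{m-1}=2^{m-3}$ first-half positions (the set where both affine factors equal $1$) and zero throughout the second half, matching the lower bound. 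The main obstacle I foresee is the careful mod-$q$ reduction $Q_1-Q_2\equiv\tfrac{q}{2}(P_1\oplus P_2)\pmod q$---the bridge enabling the $\mathrm{RM}_q(2,m-1)$ application---together with the combinatorial bookkeeping needed to place the transposition (or segment reversal) so that one factor of the resulting cross-quadratic is supported outside $\{1,\ldots,v\}$, ensuring no weight leaks into the second-half $2^v$-block.
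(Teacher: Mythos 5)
The paper does not prove this proposition: it is imported verbatim as Propositions 9 and 10 of \cite{16-ChenCY}, so there is no in-paper argument to compare yours against. Judged on its own terms, your reconstruction is sound in three of its four parts. For $\pi_1=\pi_2$, the reduction of a codeword difference to a nonzero $\mathbb{Z}_q$-affine form $l$ in $x_1,\dots,x_{m-1}$, the observation that the first $2^{m-1}$ indices exhaust $\mathbb{Z}_2^{m-1}$ so that Lemma \ref{lem-RM-Hammingdist} with $r=1$ gives weight at least $2^{(m-1)-1}=2^{m-2}$, and the tight example $l=\frac{q}{2}x_{m-1}$ (which vanishes on the length-$2^v$ tail because $m-1>v$) are all correct. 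The mod-$q$ collapse $Q_1-Q_2\equiv\frac{q}{2}(P_1\oplus P_2)\pmod q$ is also correct ($\pm\frac{q}{2}$ coincide mod $q$, and $\frac{q}{2}k\equiv\frac{q}{2}(k\bmod 2)\pmod q$), so every between-coset difference restricted to the first half is a nonzero codeword of $\mathrm{RM}_q(2,m-1)$ and the lower bound $2^{m-3}$ follows.

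The gap is in the upper bound for $\pi_1\ne\pi_2$. The proposition as phrased quantifies over \emph{any} two distinct admissible permutations, but you only exhibit distance $2^{m-3}$ for specially chosen pairs (an adjacent transposition in positions beyond $v$, or a segment reversal). Those examples do work: swapping $\pi_1(t),\pi_1(t+1)$ with $t\ge v+1$ gives $P_1\oplus P_2=(x_a+x_d)(x_b+x_c)$ with $b,c>v$, hence weight $2^{m-3}$ on the first half and $0$ on the tail. But they do not cover a general pair, and in fact the per-pair statement is false: for $m=5$, $v=1$, $q=2$, $\pi_1=(1,2,3,4)$ and $\pi_2=(1,3,4,2)$ give $P_1\oplus P_2=x_1x_2+x_2x_3+x_1x_3+x_2x_4$, a rank-$4$ quadric whose affine coset has minimum weight $2^{3}-2=6$ on the first half (and $0$ on the tail), so $d_{min}(\mathcal{C}_{11}\cup\mathcal{C}_{12})=6$, which is neither $2^{m-2}=8$ nor $2^{m-3}=4$. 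The statement must therefore be read --- consistently with how the paper deploys it in Corollaries \ref{cor-Hamming-pi} and \ref{cor-Hamming}, where an explicit achieving pair is exhibited --- as a uniform lower bound $2^{m-3}$ together with attainment by \emph{some} pair of distinct permutations, i.e., as the minimum distance of the codebook ranging over all admissible $\pi$. Under that reading your strategy (uniform lower bound plus one explicit achieving pair) is exactly the right one, but you should say explicitly that this is the sense in which you establish equality, since as written your ``matching upper bound'' does not match the literal quantifier in the statement.
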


Based on Proposition \ref{lem-Hamming} and Corollary \ref{cor-bound}, we can get the results about $d_{min}(\mathcal{C}_3)$ in the following corollary.

\begin{Corollary}\label{cor-Hamming-pi}
In the context of Corollary \ref{cor-bound} and Proposition \ref{lem-Hamming}, let
$\mathcal{C}_{21}=\left\{\left(\psi(\mathbf{f}_1)|\mathbf{0}_b|\psi(\mathbf{g}_1)\right):
f_1(\underline{\mathbf{x}})\in\mathcal{S}_{1}\right\}$ and $\mathcal{C}_{22}=\left\{(\psi(\mathbf{f}_2)|\mathbf{0}_b|\psi(\mathbf{g}_2)):
f_2(\underline{\mathbf{x}})\in\mathcal{S}_{2}\right\}$ generated in Corollary \ref{cor-bound}. Then, for $\mathcal{\tilde{C}}_2=\mathcal{C}_{21}\bigcup\mathcal{C}_{22}$, we have
\begin{eqnarray*}
 d_{min}(\mathcal{\tilde{C}}_2) &=&\left\{\begin{array}{ll}
                                            2^{m-1}, &\hbox{if}~ \pi_1=\pi_2~
                                            \hbox{and}~1\leq v<m-1;  \\
                                            2^{m-2}, & \hbox{if}~ \pi_1\ne\pi_2~
                                            \hbox{and}~1\leq v<m-2.
                                          \end{array}\right.
\end{eqnarray*}
\end{Corollary}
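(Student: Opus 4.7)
The plan is to reduce the minimum-distance computation of $\tilde{\mathcal{C}}_2$ to the bound on $\tilde{\mathcal{C}}_1$ supplied by Proposition \ref{lem-Hamming}. I would first write a generic codeword of $\tilde{\mathcal{C}}_2$ as $c = (\psi(\mathbf{f}_1)\,|\,\mathbf{0}_b\,|\,\psi(\mathbf{g}_1))$. Since the central $b$ zeros are common to every codeword, for any pair $c,c'$ one immediately obtains
$$d(c,c') = d(\psi(\mathbf{f}_1),\psi(\mathbf{f}_1'))+d(\psi(\mathbf{g}_1),\psi(\mathbf{g}_1')).$$
Invoking Theorem \ref{thm 3} and the identity $f_1 = g_1+\frac{q}{2}x_{m-1}+\frac{q}{2}x_m x_{\pi(v)}$, I would then record the pointwise relation
$$f_1(\underline{i})-f_1'(\underline{i}) \equiv g_1(\underline{i})-g_1'(\underline{i}) + \tfrac{q}{2}\,i_m\,(i_{\pi_1(v)}-i_{\pi_2(v)})\pmod q,$$
which is what couples the two summands of $d(c,c')$ in the subsequent analysis; in particular, the two pointwise differences agree whenever $i_m = 0$.

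In the case $\pi_1 = \pi_2$ the shift above vanishes identically, so $d(\psi(\mathbf{f}_1),\psi(\mathbf{f}_1')) = d(\psi(\mathbf{g}_1),\psi(\mathbf{g}_1'))$, and both $g_1, g_1'$ lie in $\mathcal{S}_1 = \mathcal{S}_2$. Applying Proposition \ref{lem-Hamming} to each summand gives $d(c,c') \ge 2\cdot 2^{m-2} = 2^{m-1}$. Tightness is obtained by lifting any pair that saturates $d_{min}(\tilde{\mathcal{C}}_1) = 2^{m-2}$ to the concatenated code, since both halves then contribute exactly $2^{m-2}$.

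For $\pi_1\ne\pi_2$, pairs of codewords lying inside a single $\mathcal{C}_{2j}$ already yield $d\ge 2^{m-1}$ by the preceding argument, so the minimum must come from ``cross pairs'' with one codeword in $\mathcal{C}_{21}$ and the other in $\mathcal{C}_{22}$. Restricting attention to the first $2^{m-1}$ coordinates (those with $i_m = 0$, where the shift vanishes), the contribution of this block to $d(c,c')$ is exactly
$$2\cdot\bigl|\{0\le i<2^{m-1}: g_1^{(\pi_1)}(\underline{i}) \ne g_1^{(\pi_2)}(\underline{i}) \pmod q\}\bigr|.$$
The restricted functions $g_1^{(\pi_j)}\big|_{x_m=0}$ are themselves Davis--Jedwab-type codewords in $m-1$ variables with distinct permutations, and rerunning the quadratic-form weight-counting argument underlying Proposition \ref{lem-Hamming} (with $m$ replaced by $m-1$) bounds this count below by $2^{m-3}$. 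Hence $d(c,c') \ge 2\cdot 2^{m-3} = 2^{m-2}$, and achievability is obtained by choosing the affine parameters $\mu_s,\mu$ so that the length-$2^{m-1}$ restriction attains distance $2^{m-3}$ while the residual differences on the length-$2^v$ second block cancel identically; this cancellation is what the stricter hypothesis $v<m-2$ is designed to accommodate.

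The main obstacle is the cross-pair analysis: one must extract the sharper bound $2^{m-3}$ for the \emph{length-}$2^{m-1}$ restriction rather than for the full length $L$ covered by Proposition \ref{lem-Hamming}, which calls for rerunning the quadratic-form argument of \cite{16-ChenCY} on the restricted functions, and then exhibit an explicit affine choice of $\mu_s,\mu$ that simultaneously saturates the first block and annihilates the second. It is this second step where the stricter hypothesis $v<m-2$ is used essentially, and verifying it will require a direct case analysis of $g_1, f_1$ evaluated on the $i_m=1$ block.
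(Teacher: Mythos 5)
Your lower-bound argument follows essentially the same route as the paper: split the distance over the two halves of $(\psi(\mathbf{f}_1)|\mathbf{0}_b|\psi(\mathbf{g}_1))$, observe that $f$ and $g$ differ by $\frac{q}{2}x_{m-1}+\frac{q}{2}x_mx_{\pi(v)}$ so that the two halves contribute equally when $\pi_1=\pi_2$ (giving $2\cdot 2^{m-2}$ via Proposition \ref{lem-Hamming}), and for $\pi_1\ne\pi_2$ restrict to the $i_m=0$ block of each half, where the difference is a nonzero codeword of $\mathrm{RM}_q(2,m-1)$ of weight at least $2^{m-3}$ (the paper invokes Lemma \ref{lem-RM-Hammingdist} directly rather than re-running the quadratic-form count, but this is the same fact). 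That part is sound.

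The genuine gap is the achievability step in the case $\pi_1\ne\pi_2$, which you defer and whose proposed resolution is misdirected. On the $i_m=0$ block the difference of a cross pair is $\frac{q}{2}(Q_1-Q_2)+(\hbox{affine})$, and the quadratic part $Q_1-Q_2$ is \emph{fixed} by the two permutations; the affine parameters $\mu_s,\mu$ only move you within a coset of $\mathrm{RM}_q(1,m-1)$, and for a general pair $\pi_1\ne\pi_2$ the minimum weight of that coset can exceed $2^{m-3}$ (it is governed by the rank of the symplectic form of $Q_1-Q_2$, not by the affine shift). So no choice of $\mu_s,\mu$ will in general ``attain distance $2^{m-3}$'' on the first block. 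What the paper actually does is exhibit one explicit pair of path quadratic forms, namely $\pi_2$ obtained from $\pi_1=\mathrm{id}$ by transposing the last two symbols, for which $f_1-f_2=g_1-g_2=\frac{q}{2}(x_{m-3}x_{m-2}-x_{m-3}x_{m-1})$; this has weight exactly $2^{m-3}$ on each length-$L$ half (and vanishes identically on the $i_m=1$ block precisely because $v<m-2$ forces $i_{m-2}=i_{m-1}=0$ there), yielding total distance $2^{m-2}$. You should replace the ``tune $\mu_s,\mu$'' plan with such an explicit choice of quadratic forms, and note that the differing monomials must avoid the variables $x_1,\dots,x_v$ so that the second block contributes nothing.
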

\begin{proof}
Case 1: If $\pi_1=\pi_2$ and $1\leq v<m-1$, then we have $\mathcal{C}_{21}=\mathcal{C}_{22}$. For any two $f_1(\underline{\mathbf{x}}),f_2(\underline{\mathbf{x}})\in\mathcal{C}_{11}$, it can be easily get that
      $f_1(\underline{i})-f_2(\underline{i})=
     g_1(\underline{i})-g_2(\underline{i})$ for any $0\leq i\leq 2^{m-1}+2^v-1$, and $$\omega(\psi(\mathbf{f}_1)-\psi(\mathbf{f}_2))=
     \omega\left(\psi(\mathbf{g}_1)-\psi(\mathbf{g}_2)\right),$$ so we have $$d\left((\psi(\mathbf{f}_1)|\mathbf{0}_b|\psi(\mathbf{g}_1)),
     (\psi(\mathbf{f}_2)|\mathbf{0}_b|\psi(\mathbf{g}_2)\right)=2\cdot\omega
     \left(\psi(\mathbf{f}_1)-\psi(\mathbf{f}_2)\right).$$ Hence, according to the first case in Proposition \ref{lem-Hamming}, it can be obtained that
     $d_{min}(\mathcal{\tilde{C}}_2)$ is equal to $2\cdot 2^{m-2}=2^{m-1}$.

 Case 2: If $\pi_1\ne\pi_2$ and $1\leq v<m-2$, let $$\mathcal{S}_{11}=\left\{g_1(\underline{\mathbf{x}})=f_1(\underline{\mathbf{x}})
      +\frac{q}{2}x_{\pi_1(v)}x_m+\frac{q}{2}x_{m-1}:f_1(\underline{\mathbf{x}})\in \mathcal{S}_1\right\},$$ $$\mathcal{S}_{12}=\left\{g_2(\underline{\mathbf{x}})=f_2(\underline{\mathbf{x}})
      +\frac{q}{2}x_{\pi_2(v)}x_m+\frac{q}{2}x_{m-1}:f_2(\underline{\mathbf{x}})\in \mathcal{S}_2\right\},$$
      also let $$\mathcal{C}'_{11}=\left\{\psi(\mathbf{g}_1):g_1(\underline{\mathbf{x}})\in\mathcal{S}_{11}\right\},
      ~\mathcal{C}'_{12}=\left\{\psi(\mathbf{g}_2):g_2(\underline{\mathbf{x}})\in\mathcal{S}_{12}\right\}.$$
      For any two codewords $\mathbf{c}_1\in\mathcal{C}'_{11}$ and $\mathbf{c}_2\in\mathcal{C}'_{12}$, let $\mathbf{d}=\mathbf{c}_1-\mathbf{c}_2$ and $\mathbf{d}^{(2^{m-1})}$ be the sequence of length $2^{m-1}$ by truncating $\mathbf{d}$. Since $\mathbf{d}^{(2^{m-1})}$ is a nonzero codeword of RM$_q(2,m-1)$, the Hamming weight of $\mathbf{d}$ is at least $2^{m-3}$ according to Lemma \ref{lem-RM-Hammingdist}. So, $d_{min}(\mathcal{C}'_{11}\bigcup\mathcal{C}'_{12})$ is larger than $2^{m-3}$, and then $d_{min}(\mathcal{\tilde{C}}_2)$ is larger than $2\cdot2^{m-3}=2^{m-2}$. Next, we exhibit two codewords in $\mathcal{\tilde{C}}_2$ where the distance is equal to $2^{m-2}$. Let
      \begin{eqnarray*}
      f_1(\mathbf{\underline{x}})&=&\frac{q}{2}\left(x_1x_2+x_2x_3+\cdots+x_{m-4}x_{m-3}
      +x_{m-3}x_{m-2}+x_{m-2}x_{m-1}\right), \\
      g_1(\mathbf{\underline{x}})&=&f_1(\mathbf{\underline{x}})+\frac{q}{2}x_vx_m+\frac{q}{2}x_{m-1}, \\
      f_2(\mathbf{\underline{x}})&=&\frac{q}{2}\left(x_1x_2+x_2x_3+\cdots+x_{m-4}x_{m-3}
      +x_{m-3}x_{m-1}+x_{m-1}x_{m-2}\right),\\
      g_2(\mathbf{\underline{x}})&=&f_2(\mathbf{\underline{x}})+\frac{q}{2}x_vx_m+\frac{q}{2}x_{m-1},
      \end{eqnarray*}
      also let $\mathbf{c}_1=\left(\psi(\mathbf{f}_1)|\mathbf{0}_b|\psi(\mathbf{g}_1)\right)$,
      $\mathbf{c}_2=\left(\psi(\mathbf{f}_2)|\mathbf{0}_b|\psi(\mathbf{g}_2)\right)$
      and $\mathbf{d}=\mathbf{c}_1-\mathbf{c}_2$.
      then we have
      \begin{eqnarray*}
        f_1(\mathbf{\underline{x}})-f_2(\mathbf{\underline{x}})=g_1(\mathbf{\underline{x}})
        -g_2(\mathbf{\underline{x}})=x_{m-3}x_{m-2}-x_{m-3}x_{m-1}.
      \end{eqnarray*}
      So $$\omega(\mathbf{d})=2\cdot\omega(\psi(\mathbf{f}_1)-\psi(\mathbf{f}_2))
      =2\cdot\omega\left(\mathbf{x}_{m-3}\mathbf{x}_{m-2}-\mathbf{x}_{m-3}\mathbf{x}_{m-1}\right)
      =2\cdot2^{m-3}=2^{m-2}.$$

This completes the proof of this corollary.
\end{proof}

The following is a straightforward result of Corollary \ref{cor-Hamming-pi}.

\begin{Corollary}\label{cor-Hamming}
In the context of Corollary \ref{cor-bound} and Proposition \ref{lem-Hamming}, we have
\begin{eqnarray*}
 d_{min}(\mathcal{\tilde{C}}_3) &=&\left\{\begin{array}{ll}
                                            2^{m-1}=4, &\hbox{if}~ m=3~
                                            \hbox{and}~1\leq v<m-1;  \\
                                            2^{m-2}, & \hbox{if}~ m>3~
                                            \hbox{and}~1\leq v<m-2.
                                          \end{array}\right.
\end{eqnarray*}
\end{Corollary}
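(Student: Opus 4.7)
The plan is to derive Corollary \ref{cor-Hamming} as a direct specialization of Corollary \ref{cor-Hamming-pi} to the subcode $\mathcal{C}_3$, combined with an elementary case analysis on $m$. First, I would fix the meaning of $\mathcal{\tilde{C}}_3$: by analogy with $\mathcal{\tilde{C}}_2 = \mathcal{C}_{21}\cup\mathcal{C}_{22}$, the set $\mathcal{\tilde{C}}_3$ is the union over a pair of permutations $\pi_1,\pi_2$ (subject to the usual constraint $\{\pi_s(1),\dots,\pi_s(v)\}=\{1,\dots,v\}$) of the two component subcodes built as in Corollary \ref{cor-Hamming-pi}, but additionally restricted by $\lambda_s=0$ and $\mu_m=0$ so that every codeword lies in $\mathcal{C}_3$. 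Since the seed sets $\mathcal{S}_1,\mathcal{S}_2$ of Proposition \ref{lem-Hamming} already have these coefficients suppressed, the passage from $\mathcal{\tilde{C}}_2$ to $\mathcal{\tilde{C}}_3$ is just a bookkeeping restriction and does not alter the minimum-distance analysis.

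Next, I would split on $m$. In the regime $m=3$ the second branch of Corollary \ref{cor-Hamming-pi}, which requires $1\le v<m-2=1$, is vacuous, so only the branch $\pi_1=\pi_2$ is available; for the unique allowed value $v=1$ this yields $d_{\min}(\mathcal{\tilde{C}}_3)=2^{m-1}=4$. In the regime $m>3$ (and $1\le v<m-2$) both branches are populated, so $d_{\min}(\mathcal{\tilde{C}}_3)$ is the minimum of $2^{m-1}$ and $2^{m-2}$, which is $2^{m-2}$. This furnishes the two cases of the corollary directly.

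The only genuinely non-trivial point, and what I expect to be the main obstacle, is checking that the explicit witness pair exhibited at the end of the proof of Corollary \ref{cor-Hamming-pi} already sits inside the restricted family $\mathcal{\tilde{C}}_3$, so that the tightness argument is inherited without modification. Inspecting the displayed $f_1,f_2,g_1,g_2$ in that proof, one sees that the coefficients $\lambda_s$ are all zero and $\mu_m=0$; the extra summands $\tfrac{q}{2}x_v x_m+\tfrac{q}{2}x_{m-1}$ that appear in $g_1,g_2$ come from the construction of Theorem \ref{thm 3} and are not free parameters to be constrained. Hence the witnesses lie in $\mathcal{\tilde{C}}_3$ and the upper bounds $2^{m-1}$ and $2^{m-2}$ are attained in the respective regimes, completing the proof.
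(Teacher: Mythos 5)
Your proof is correct and follows essentially the same route as the paper's: both reduce the statement to Corollary \ref{cor-Hamming-pi} and split on whether $m=3$ (where the constraint forces a unique permutation, so only the $\pi_1=\pi_2$ branch can occur) or $m>3$ (where distinct admissible permutations exist, so the minimum $\min\{2^{m-1},2^{m-2}\}=2^{m-2}$ is attained). Your extra verification that the witness pair from the proof of Corollary \ref{cor-Hamming-pi} has all $\lambda_s$ and $\mu_s$ equal to zero and therefore lies in the restricted subcode $\mathcal{\tilde{C}}_3$ is a detail the paper leaves implicit.
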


\begin{proof}
  Case 1: For $m=3$ and $1\leq v<m-1$, there is only permutation $\pi$ of $\{1,...,m-1\}$ satisfying $1\leq v<m-1$ where $\pi(1)=1,\pi(2)=2$, then, according to Corollary \ref{cor-Hamming-pi}, we have
  $d_{min}(\mathcal{\tilde{C}}_3)=2^{m-1}$.

 Case 2: For $m>3$ and $1\leq v<m-2$, there exist different permutations of $\{1,...,m-1\}$ satisfying $1\leq v<m-2$, then by Corollary \ref{cor-Hamming-pi}, we have $d_{min}(\mathcal{\tilde{C}}_3)=min\{2^{m-1},2^{m-2}\}=2^{m-2}$.

This completes the proof of this corollary.
\end{proof}

\begin{Remark}
Since $\mathcal{C}_3$ is a subcode of $\mathcal{C}_2$, so we have $d_{min}(\mathcal{C}_2)\leq d_{min}(\mathcal{C}_3)$. $d_{min}(\mathcal{C}_2)$ would increase after deleting some codewords in $\mathcal{C}_2$.
For example, in the context of Corollary \ref{cor-Hamming-pi}, the codebook $\mathcal{C}_{21}$ is also a subcode of $\mathcal{C}_2$ with $d_{min}(\mathcal{C}_{21})=2^{m-1}$ for $m\geq 3$.
\end{Remark}

Table \ref{table-code-rate-distance} shows the code-rate and minimum Hamming distance of $\mathcal{C}_3$ for various $b,m,v$ when $q=2$, i.e., $\mathcal{C}_3$ is a binary code. It can be observed that $d_{min}(\mathcal{C}_3)$ is independent with $b$ and $v$ where $1\leq v<m-2$.
\begin{table*}[ht]
  \centering
    \caption{The code-rate and minimum Hamming distance of binary code $\mathcal{C}_3$ for various $b,m,v$}\label{table-code-rate-distance}
  \begin{tabular}{|c|c|c|c|c|c|}
  \hline
  $b$ & $m$& $v$ &length&code-rate&$d_{min}$ \\  \hline
  \multirow{7}{*}{0}& 3& 1&6 &  0.2500 & 4 \\ \cline{2-6}
  &   4 &  1 &10 &0.2500  &  4  \\ \cline{2-6}
  &  \multirow{2}{*}{5} & 1 &  18 & 0.1944 & 8   \\ \cline{3-6}
  && 2 & 20 &  0.1750  & 8\\ \cline{2-6}
  &  \multirow{3}{*}{6} & 1 &  34 & 0.1471 &  16  \\ \cline{3-6}
  && 2 & 36 &  0.1250   & 16 \\ \cline{3-6}
  && 3 & 40 &  0.1125  & 16 \\ \hline
    \multirow{7}{*}{1}& 3& 1&7 &0.2308  & 4 \\ \cline{2-6}
  &     4 &  1 &11 &0.2381  & 4\\ \cline{2-6}
  &  \multirow{3}{*}{5} & 1 &  19 & 0.1892 & 8   \\ \cline{3-6}
  && 2 & 21 &  0.1707  & 8\\ \cline{2-6}
 &  \multirow{3}{*}{6} & 1 &  35 &  0.1449 &  16  \\ \cline{3-6}
  && 2 & 37 & 0.1233    &16 \\ \cline{3-6}
  && 3 & 41 & 0.1111   &16 \\ \hline
    \multirow{7}{*}{2}& 3& 1&8 & 0.2143 &4  \\ \cline{2-6}
  &    4 &  1 &12 &  0.2273 & 4\\ \cline{2-6}
  &  \multirow{2}{*}{5} & 1 &  20&  0.1842 & 8   \\ \cline{3-6}
 && 2 & 22 &  0.1667  & 8\\ \cline{2-6}
&  \multirow{3}{*}{6} & 1 &  36 & 0.1429 & 16   \\ \cline{3-6}
  && 2 & 38 &  0.1216   & 16\\ \cline{3-6}
  && 3 & 42 & 0.1098   & 16\\ \hline
\end{tabular}
\end{table*}

\section{Concluding Remarks}
In this paper, we presented a systematic construction of CSs (CSs) with low PAPR and spectral nulls at certain positions. Our proposed sequences may find applications as low PAPR codewords for practical OFDM transmissions where certain sub-carriers are nulled or not allowed to transmit, for example, as transmission codewords for non-contiguous OFDM transmission in Cognitive Radio. As one of our main results, we proposed iterative techniques to generate MOCSs which can then be used to design more general CSs with spectral nulls not only at the center but also at other positions within the sequences. Using tools from generalized Boolean functions, we then provided a new construction of the \emph{seed} MOCSs with non-power-of-two lengths, which can be used to construct CSs. Our final proposed CSs have low PAPR, non-power-of-two lengths and can be used as OFDM codewords with any number of spectral nulls at the middle within the sequences. It would be interesting to construct MOCSs with large size and non-power-of-two lengths, and then obtain sequences with more flexible nulls under the proposed iterative construction.

\section*{Acknowledgments}
The authors are very grateful to the three anonymous reviewers and the Associate Editor, Prof.  Albert Guillen i Fabregas, for their valuable comments that improved the presentation and quality of this paper.  Special thanks go
to one of the reviewers for bringing our attention to a recent work in \cite{18-Ipanov}.


\bibliographystyle{IEEEtran}
\bibliography{IEEEfull.bib}
\end{document}